\newcommand\irregularcircle[2]{
  let \n1 = {(#1)+rand*(#2)} in
  +(0:\n1)
  \foreach \a in {10,20,...,350}{
    let \n1 = {(#1)+rand*(#2)} in
    -- +(\a:\n1)
  } -- cycle
}
\newcommand{\ignore}[1]{}
\newcommand{\pcode}[2][\codesize]{
    \fbox{
    \begin{minipage}{0.45\linewidth}
    #1
    \begin{tabbing}
    xx \= xx \= xx \= xx \= xx \= xx \= xx \= xx \= xx \= xx \= \kill
    #2
    \end{tabbing}
    \end{minipage}
    }
  }
\title{Precondition Inference via \\ Partitioning of Initial States}
\author 
{Bishoksan Kafle \inst{1}
\and
Graeme Gange \inst{2}
\and 
Peter Schachte  \inst{1} 
\and \\
Harald S{\o}ndergaard  \inst{1}
\and
Peter J. Stuckey  \inst{2}
}
\institute{The University of Melbourne
\and
 Monash university}
\begin{document}
\maketitle

\newcommand{\rahft}{\textsc{Rahft}}
\newcommand{\wprahft}{\textsc{WP-Rahft}}
\newcommand{\pihorn}{\textsc{PI-Horn}}
\newcommand{\integ}{{\sf int}}
\newcommand{\listint}{{\sf listint}}
\newcommand{\other}{{\sf other}}
\newcommand{\true}{\textsc{True}}
\newcommand{\false}{\textsc{False}}
\newcommand{\trueit}{\mathit{true}}
\newcommand{\falseit}{\mathit{false}}
\newcommand{\falsett}{\mathtt{false}}
\newcommand{\Bin}{{\sf Bin}}
\newcommand{\Dep}{{\sf Dep}}
\newcommand{\g}{{\sf g}}
\newcommand{\nong}{{\sf ng}}
\newcommand{\OL}{{\cal O}}
\newcommand{\M}{{\sf M}}
\newcommand{\R}{{\cal R}}
\newcommand{\A}{\mathcal{A}}

\newcommand{\body}{\mathcal{B}}
\newcommand{\B}{{\cal B}}
\newcommand{\C}{{\cal C}}
\newcommand{\D}{{\cal D}}
\newcommand{\X}{{\cal X}}
\newcommand{\V}{{\cal V}}
\newcommand{\Q}{{\cal Q}}
\newcommand{\F}{{\sf F}}
\newcommand{\N}{{\cal N}}
\newcommand{\Lang}{{\cal L}}
\newcommand{\powerset}{{\cal P}}
\newcommand{\FTA}{{\cal FT\!A}}
\newcommand{\Term}{{\sf Term}}
\newcommand{\Empty}{{\sf empty}}
\newcommand{\nonEmpty}{{\sf nonempty}}
\newcommand{\compl}{{\sf complement}}
\newcommand{\args}{{\sf args}}
\newcommand{\preds}{{\sf preds}}
\newcommand{\gnd}{{\sf gnd}}
\newcommand{\lfp}{{\sf lfp}}
\newcommand{\psharp}{P^{\sharp}}
\newcommand{\minimize}{{\sf minimize}}
\newcommand{\headterms}{\mathsf{headterms}}
\newcommand{\solvebody}{\mathsf{solvebody}}
\newcommand{\solve}{\mathsf{solve}}
\newcommand{\fail}{\mathsf{fail}}
\newcommand{\member}{\mathsf{memb}}
\newcommand{\ground}{\mathsf{ground}}
\newcommand{\abst}{\mathsf{abstract}}
\newcommand{\bodyfacts}{\mathsf{bodyfacts}}
\newcommand{\renameunfold}{\mathsf{renameunfold}}
\newcommand{\rep}{\mathsf{rep}}
\newcommand{\cfacts}{\mathsf{cfacts}}

\newcommand{\raf}{{\sf raf}}
\newcommand{\qa}{{\sf qa}}
\newcommand{\spl}{{\sf split}}

\newcommand{\tr}{{\sf tr}}
\newcommand{\trA}[1][A]{\ensuremath{\mathsf{tr}_{#1}}}
\newcommand{\ris}{{\sf u\_approximate}}
\newcommand{\replace}{{\sf init\_replace}}

\newcommand{\transitions}{\mathsf{transitions}}
\newcommand{\nonempty}{\mathsf{nonempty}}
\newcommand{\dom}{\mathsf{dom}}

\newcommand{\Args}{\mathsf{Args}}
\newcommand{\id}[1]{\mathit{#1}}
\newcommand{\type}{\tau}
\newcommand{\restrict}{\mathsf{restrict}}
\newcommand{\any}{\top}
\newcommand{\dyn}{\top}
\newcommand{\dettypes}{{\sf dettypes}}
\newcommand{\Atom}{{\sf Atom}}

\newcommand{\chc}{{\sf chc}}
\newcommand{\deriv}{{\sf deriv}}

\newcommand{\vars}{\mathsf{vars}}
\newcommand{\Vars}{\mathsf{Vars}}
\newcommand{\range}{\mathsf{range}}
\newcommand{\varpos}{\mathsf{varpos}}
\newcommand{\varid}{\mathsf{varid}}
\newcommand{\argpos}{\mathsf{argpos}}
\newcommand{\elim}{\mathsf{elim}}
\newcommand{\pred}{\mathsf{pred}}
\newcommand{\predfuncs}{\mathsf{predfuncs}}
\newcommand{\project}{\mathsf{project}}
\newcommand{\reduce}{\mathsf{reduce}}
\newcommand{\positions}{\mathsf{positions}}
\newcommand{\contained}{\preceq}
\newcommand{\equivalent}{\cong}
\newcommand{\unify}{{\it unify}}
\newcommand{\Iff}{{\rm iff}}
\newcommand{\Where}{{\rm where}}
\newcommand{\qmap}{{\sf qmap}}
\newcommand{\fmap}{{\sf fmap}}
\newcommand{\ftable}{{\sf ftable}}
\newcommand{\Qmap}{{\sf Qmap}}
\newcommand{\states}{{\sf states}}
\newcommand{\head}{\tau}
\newcommand{\atomconstraints}{\mathsf{atomconstraints}}
\newcommand{\thresholds}{\mathsf{thresholds}}
\newcommand{\term}{\mathsf{Term}}
\newcommand{\trees}{\mathsf{trees}}
\newcommand{\renames}{\rho}
\newcommand{\renameps}{\rho_2}
\newcommand{\predicates}{\mathsf{Predicates}}
\newcommand{\query}{\mathsf{q}}
\newcommand{\ans}{\mathsf{a}}
\newcommand{\trace}{\mathsf{tr}}
\newcommand{\constr}{\mathsf{constr}}
\newcommand{\Iproj}{\mathsf{proj}}
\newcommand{\SAT}{\mathsf{SAT}}
\newcommand{\interpolant}{\mathsf{interpolant}}
\newcommand{\unknown}{?}
\newcommand{\rhs}{{\sf rhs}}
\newcommand{\lhs}{{\sf lhs}}
\newcommand{\unfold}{{\sf unfold}}
\newcommand{\arity}{{\sf ar}}
\newcommand{\AND}{{\sf AND}}

\newcommand{\feasible}{{\sf feasible}}
\newcommand{\infeasible}{{\sf infeasible}}
\newcommand{\safe}{{\sf safe}}
\newcommand{\unsafe}{{\sf unsafe}}
\newcommand{\init}{{\sf init}}
\newcommand{\theory}{\mathbb{T}}
\newcommand{\trseq}{{\textsf{tr-seq}}}
\newcommand{\trseqA}[1][A]{\ensuremath{\textsf{tr-seq}_{#1}}}

\renewcommand{\phi}{\varphi}
\newcommand{\suffpre}[1]{\phi^{S}_{#1}}
\newcommand{\necpre}[1]{\phi^{N}_{#1}}

\newcommand{\swp}{{\suffpre{s}}} 
\newcommand{\spec}{{\sf spec}} 

\newcommand{\atmost}[1]{\le #1}
\newcommand{\exactly}[1]{=#1}
\newcommand{\exceeds}[1]{>#1}
\newcommand{\anydim}[1]{\ge 0}

\def\ll{[\![}
\def\rr{]\!]}

\newcommand{\sset}[2]{\left\{~#1  \left|
                               \begin{array}{l}#2\end{array}
                          \right.     \right\}}

\newcommand{\qin}{\hspace*{0.15in}}
\newenvironment{SProg}
     {\begin{small}\begin{tt}\begin{tabular}[t]{l}}%
     {\end{tabular}\end{tt}\end{small}}
\def\anno#1{{\ooalign{\hfil\raise.07ex\hbox{\small{\rm #1}}\hfil%
        \crcr\mathhexbox20D}}}

%

\newcommand{\tuple}[1]{\langle #1 \rangle} 
\newcommand{\tuplevar}[1]{\mathbf{#1}}

\begin{abstract}
 Precondition inference is a non-trivial task with several applications in program analysis and verification. We present a novel iterative method for automatically deriving sufficient
preconditions for safety and unsafety of programs which introduces a new 
dimension of modularity. Each iteration maintains over-approximations of the set of \emph{safe} and \emph{unsafe} \emph{initial} states. 
Then we repeatedly use the current abstractions to partition the program's \emph{initial} states into those known to be safe, known to be unsafe and unknown, and construct a revised program focusing on those initial states that are not yet known  to be safe or unsafe. An experimental evaluation of the method on a set of software verification benchmarks shows that it can solve problems which are not solvable using  previous methods.

\ignore{

Precondition inference is a non-trivial task with several applications in program analysis and verification. We present a novel iterative method for automatically deriving sufficient
preconditions for safety and unsafety of programs which introduces a new 
dimension of modularity. Each iteration maintains over-approximations of the set
of \emph{safe} and \emph{unsafe} \emph{initial} states. Existing methods modify
these abstractions  while  the program remains the same. Instead, we use the current abstractions to
partition the program's \emph{initial} states into those known to be safe, unsafe and unknown. This enables our method to
focus on a new and unexplored part of the program with unknown \emph{initial} states. An experimental evaluation
of the method on a set of software verification benchmarks shows that it can
solve problems which are not solvable using  previous methods.

We present an iterative method for automatically deriving sufficient preconditions for safety and unsafety of programs expressed as constrained Horn clauses. The method maintains two formulas $\phi_{s}$ and $\phi_{u}$ representing over-approximations of the set of \emph{safe} and \emph{unsafe} \emph{initial states} respectively, which are successively refined using abstract interpretation and program transformations. Each iteration focuses on the set of states in $\phi_{s} \wedge \phi_{u}$, that is, on states yet to be classified as \emph{safe} or \emph{unsafe}. The method terminates when the conjunction is unsatisfiable or it does not strengthen in the successive iteration. Then, the sufficient preconditions are derived from the sequences of $\phi_{s}$ and $\phi_{u}$ obtained in each iteration. An experimental evaluation of the method on a set of software verification benchmarks shows that it can solve problems which are not solvable using only approximations of the set of \emph{unsafe} states, as in previous works.}

\ignore{
We present an iterative method for automatically deriving sufficient preconditions for safety and unsafety of programs expressed as constrained Horn clauses. The method maintains two formulas $\necpre{s}$ and $\necpre{u}$ representing over-approximations of the set of \emph{safe} and \emph{unsafe} \emph{initial states} respectively, which are successively refined using abstract interpretation and program transformations. Each iteration focuses on the set of states in $\necpre{s} \wedge \necpre{u}$, that is, on states yet to be classified as \emph{safe} or \emph{unsafe}. The method terminates when the conjunction is unsatisfiable or it does not strengthen in the successive iteration. Then, the sufficient preconditions are derived from the sequences of $\necpre{s}$ and $\necpre{u}$ obtained in each iteration. An experimental evaluation of the method on a set of software verification benchmarks shows that it can solve problems which are not solvable using only approximations of the set of \emph{unsafe} states, as in previous works.
}

\end{abstract}

\bk{
- not self-contained (because the way examples and transformations
in Section 3 are presented do not allow a reader to understand how they work)
-transformations are explained insufficiently
-verification of technical soundness is complicated as the proofs (or their
sketches) are not given by the authors. Both issues, however, can be solved if
authors will publish a technical report with those details
}

\section{Introduction}\label{sec:intro}
Precondition analysis infers conditions (on the initial states of a program)  that establish runtime properties of interest. For example, a \emph{sufficient} precondition for safety is a set of initial states, each of which is guaranteed to be safe with respect to given safety properties (assertions). 
Preconditions have several important applications and shed  light on the following questions: (i) which inputs ensure safe program execution? (program analysis); (ii) which inputs can or cannot cause an error? (program debugging); (iii) what are the valid inputs for a program? (program understanding); and  (iv) which library functions need to be modified to ensure valid outputs? (program specialisation).

However, the problem is undecidable in general, hence approaches to finding preconditions use approximations. Preconditions derived by over-approximations are usually too weak, while under-approximations are usually too strong to be useful in practice.  Unfortunately, for reasons of computability, we cannot hope to obtain an exact (optimal)  precondition (though such preconditions enable their re-use, for example, under different calling contexts), so we must instead look for a (weaker) precondition that is useful. 
This work builds upon the transformation-guided framework of Kafle et al.~\cite{kafle-iclp18}.  Additionally, we  model the set of safe terminating states so that we can iteratively refine  the approximations of the set of safe states as well (as in ~\cite{kafle-iclp18}). We also   partition  initial states guided by the over-approximations of the safe and unsafe states to derive  more precise sufficient preconditions, and in some cases precise necessary and sufficient preconditions (optimal). 

We use constrained Horn clauses (CHCs) to encode the computation of C programs.
CHCs can conveniently be used to represent big- and small-step semantics, transition systems, imperative programs and other models of computation. The use of CHCs allows decoupling programming language syntax and semantics from the analysis. We use two special predicates, $\safe$ and $\unsafe$ respectively to encode the set of safe and the set of unsafe (error) states. 
A safe state is reached in all paths without error (return statements are encoded by $\safe$ predicate). The predicate \init~ encodes the set of initial states. We assume that the users specify all the states of interest by an appropriate construct provided by the language (for example, \emph{assert(c), return $\tuple{n}$} of the C language). States not specified by the user (for example, \emph{buffer-overflow, floating point exceptions}) are not taken into account while generating CHCs. Hence the correctness of the preconditions depends on the user specified set of states. 

\noindent
\textbf{Running example.} Before formally presenting our approach, we illustrate the main steps via the example program in Figure~\ref{ex:precond}.
The left side shows a C program fragment, and the right its CHC
representation, encoding the reachable states of the computation. 
Program variables in C are represented by logical variables (Prolog style capital letters) in CHC. The clause $c_1$ specifies the initial states of the program via the predicate \emph{init} which is always reachable. Similarly, the clauses $c_2$ and $c_3$ encode the reachability of the \emph{while} loop via the predicate \emph{wh}. $c_2$ states that the loop is reachable if \emph{init} is reachable while $c_3$ states that the loop is reachable if it is already in the loop and the loop guard is satisfied (recursive case). 
The first three clauses on the right define the behaviour of the program in 
Figure~\ref{ex:precond} (left) and the last two clauses represent the  
properties of the program. The clause $c_4$ states that an ``unsafe'' state is reached if $\mathtt{B<0}$ after exiting the loop whereas the clause $c_5$ states that the program terminates gracefully if $\mathtt{B \geq 0}$. Note that the semantics of \emph{assert(c)} is \emph{if(c)} \texttt{SKIP} \emph{else} \texttt{ERROR}.
CHCs can be obtained from an imperative program using various approaches 
\cite{Peralta-Gallagher-Saglam-SAS98,DBLP:conf/pldi/GrebenshchikovLPR12,DBLP:conf/cav/GurfinkelKKN15,DBLP:journals/scp/AngelisFPP17}.  Henceforth whenever we refer to a program, we refer to its CHC representation. 

\begin{figure}[t]
  \begin{center}
  \begin{tabular}{ll}
    \pcode[\small]{
   \textbf{int} main(\textbf{int}~ a, \textbf{int}~ b) \{ \\
   ~~\textbf{while} ($a \geq 1 $) \{  \\
   ~~~~$a=a-1;$ 
      	 $b=b-1;$ \\    
   ~~\}~~ \\
   ~~\textbf{assert} ($b \geq 0$); \\
   \} 
      }    &
    \pcode[\small]{
$\mathtt{c_1.~ init(A,B)}.$ \\ 
$\mathtt{c_2.~ wh(A,B) \leftarrow init(A,B).}$\\
$\mathtt{c_3.~ wh(A,B) \leftarrow A_0\geq1, A=A_0-1,}$ \\
$\mathtt{~~~~~~~~~~~~~~~~~~~~ B=B_0-1, wh(A_0,B_0).}$\\
$\mathtt{c_4.~ unsafe\leftarrow A<1, B<0, wh(A,B).}$ \\
$\mathtt{c_5.~ safe\leftarrow A<1, B\geq0, wh(A,B).}$
    } 
  \end{tabular}
  \end{center}
  \caption{Running example: (left) original program, (right) translation to
   CHCs}
 \label{ex:precond}
\end{figure}
Observe that the  assertion is violated if the initial conditions on $a$ and $b$ entail $\mathtt{(b<0 \wedge a \leq 0) \vee (a\geq 1 \wedge a> b)}$, and the program terminates gracefully if  $\mathtt{(b\ge0 \wedge a\leq 0) \vee (a\geq 1 \wedge b\geq a)}$. 
Automatic derivation of these preconditions is challenging for the following reasons:
 (i)   the desired result is a disjunction of linear constraints---so we need a domain that can express disjunctions;
(ii) information has to be propagated forwards and backwards because we 
      lose information on $b$ and $a$ in the forward and in the backward direction respectively;
(iii) we need to reason simultaneously about the safe and unsafe states; one cannot simply be obtained by complementing the other. Existing methods \cite{DBLP:conf/lopstr/HoweKL04,DBLP:conf/sas/BakhirkinBP14,DBLP:conf/vmcai/Moy08,DBLP:journals/entcs/Mine12,kafle-iclp18} cannot infer the desired preconditions.

Let $\necpre{s}$ and $\necpre{u}$ represent the over-approximations of safe and unsafe initial states, that is, necessary preconditions (NPs) for safety and unsafety, respectively. From the program in Figure~\ref{ex:precond}, we get $\necpre{s}=\necpre{u}=\trueit$ (the set of initial states). Thus the sufficient preconditions (SPs) for: (i) safety $\suffpre{s}= \necpre{s} \wedge \neg \necpre{u}$ and (ii) unsafety $\suffpre{u}= \necpre{u} \wedge \neg \necpre{s}$ are both $\falseit$.  These SPs, although valid, are uninteresting. So, we transform the program with respect to the predicates \safe\ and \unsafe\ using CHC transformations described in Section \ref{sec:specialisation} (with the aim of strengthening NPs) and 
derive $\necpre{s} \equiv \mathtt{(B\ge0 \wedge A\leq 0) \vee (A\geq 1 \wedge B\geq 0)}\equiv \mathtt{B \geq 0}$ and $\necpre{u} \equiv \mathtt{(B<0 \wedge A\leq 0) \vee A\geq 1}\equiv \mathtt{B \leq 0 \vee A \geq 1}$ from the resulting programs. The SPs from this iteration are 
$\suffpre{s} \equiv \mathtt{B \geq 0 \wedge A \leq 0}$ and
$\suffpre{u} \equiv \mathtt{B <0}$. 
These NPs are more precise than the one obtained in the previous step. However further refinement is possible since the NPs for safety and unsafety overlap, that is, $\necpre{s} \wedge \necpre{u}= \mathtt{B\geq 0 \wedge A \ge 1}$ is \emph{satisfiable}. Since the formulae $\necpre{s} \wedge \neg \necpre{u}$ and $\necpre{u} \wedge \neg \necpre{s}$ are already proven to be  \emph{safe} and \emph{unsafe} initial states respectively, we can focus our attention on $\necpre{s} \wedge \necpre{u}$, which is yet to be classified as \emph{safe} or \emph{unsafe}. To perform further classification, we constrain the initial states of the program in Figure \ref{ex:precond} to $\mathtt{B\geq 0 \wedge A \ge 1}$ and restart the analysis by specialising the programs with respect to \safe~and \unsafe.  As a result, 
we derive $\necpre{s} \equiv \mathtt{(B\ge A \wedge A\geq 1)}$ and $\necpre{u} \equiv \mathtt{(B \ge 0 \wedge A > B)}$ as NPs.
Since $\necpre{s} \wedge \necpre{u} \equiv \falseit$, the preconditions are optimal (they are both sufficient and necessary for the validity of assertions) and the algorithm terminates. The final SPs are given (as the disjunction of SPs over the iterations) as follows (after simplifications):
\begin{flalign*}
\suffpre{s} \equiv \mathtt{(B\ge 0 \wedge A \leq 0) \vee (B\ge A \wedge A\geq 1).}  ~~~~
\suffpre{u} \equiv \mathtt{B<0 \vee (B \ge 0 \wedge A > B).}
\end{flalign*}
The key contributions of the paper  are as follows. 

\begin{itemize}
\item We extend the transformation-guided framework of Kafle et al.~\cite{kafle-iclp18} which can now be used to derive SPs not only for safety but also for unsafety using over-approximation techniques. This is enabled by the encoding of the \emph{safe} set of states together with the \emph{unsafe} set of states and their simultaneous over-approximations. It also allows detecting optimality as well  as controlling precision of the derived preconditions. In addition, it  allows us to derive precondition for  non-termination. 
\item We partition the initial states of the program (guided by over-approximations) into those known to be safe, known to be unsafe, and whose safety are unknown. This enables us to reason incrementally on the new partition that is yet to be proven safe or unsafe. The precondition for the original program can then be derived by composing preconditions of each partitions. 
\end{itemize}  

\paragraph{Paper outline.} 
After preliminaries in Section~\ref{sec:prelim}, we describe preconditions, 
CHC transformations and their relationships in 
Section~\ref{sec:specialisation}. 
Section \ref{sec:precond} presents our algorithm for precondition inference, 
and Section~\ref{sec:experiments} is an account of experimental evaluation.
Section~\ref{sec:rel} places the work in context and Section~\ref{sec:conclusion} concludes.

\section{Preliminaries}
\label{sec:prelim}
An atomic formula (simply \emph{atom}), is a formula $p(\tuplevar{x})$ 
where $p$ is a predicate symbol and $\tuplevar{x}$ a tuple of arguments. 
A constrained Horn clause (CHC) is a first-order formula written as $p_0(\tuplevar{x}_0) \leftarrow \phi, 
p_1(\tuplevar{x}_1), \ldots, p_k(\tuplevar{x}_k)$ following Constraint Logic Programming (CLP) standard, 
where $\phi$ is a finite conjunction of quantifier-free \emph{constraints} 
on variables $\tuplevar{x}_i$ with respect to some constraint theory 
$\theory$, $p_i(\tuplevar{x}_i)$ are atoms and  $p_0(\tuplevar{x}_0)$ and $\phi, 
p_1(\tuplevar{x}_1), \ldots, p_k(\tuplevar{x}_k)$ are respectively called the \emph{head} and the \emph{body} of the clause.

In textual form,  we use Prolog-like syntax and typewriter font, 
with  capital letters for variable names and linear arithmetic constraints. A \emph{constrained fact} is a clause of the form  $p_0(\tuplevar{x}_0) \leftarrow \phi$, where $\phi \in \theory$.  A clause of the form $\unsafe \leftarrow \ldots$ is called an \emph{integrity constraint} where the predicate $\unsafe$ is always interpreted as $\falseit$. 
A set of CHCs is usually called a program. 
  
\paragraph{CHC semantics.} 

The semantics of CHCs is derived  from the semantics of predicate logic.  
An \emph{interpretation} assigns to each predicate a relation over the
domain of the constraint theory $\theory$. An interpretation that satisfies  each clause in the set is called a model of a set of CHCs. Note that the set of CHCs without any \emph{integrity constraint}  always has a model (an interpretation that assigns \emph{true} to each atom on the head of a clause).  When modelling safety properties of systems using CHCs, we also refer to CHCs as being \emph{safe} or \emph{unsafe} when 
they have a model, or do not have a model, respectively. 
We assume that the theory $\theory$ is equipped with a decision procedure and a  projection operator, and that it is closed under negation.
We use notation $\phi|_{V}$ to represent the constraint formula $\phi$
projected onto variables $V$. 
 \begin{definition}[AND-tree or derivation tree, adapted from \cite{Gallagher-Lafave-Dagstuhl}]\label{def:andtree}
An \emph{AND-tree} for a set of CHCs is a labelled tree whose nodes are labelled as follows.
\begin{enumerate}
\item
each non-leaf node corresponds to a clause (with variables suitably renamed) of the form
$A \leftarrow \phi, A_1,\ldots,A_k$ and is labelled by an atom $A, \phi$, and has children labelled by $A_1,\ldots,A_k$,
\item
each leaf node corresponds to a clause of the form $A \leftarrow \phi$ (with variables suitably renamed) and is labelled by
an atom $A$ and $\phi$, and
\end{enumerate}
 \end{definition}
Given a derivation tree $t$, $\constr(t)$ represents the conjunction of its constraints. The tree $t$ is \emph{feasible} if and only if $\constr(t)$ is satisfiable  over $\theory$.

\begin{definition}[Initial clauses and nodes]\label{def:initial-node}
Let $P$ be a set of CHCs, with a
{distinguished} predicate 
$p^{I}$ in $P$ which we call the \emph{initial predicate}.
The \emph{constrained facts}
$\{(p^I(\tuplevar{x}) \leftarrow \theta) 
\mid (p^I(\tuplevar{x}) \leftarrow \theta) \in P ~\text{and}~ \theta \in \theory\}$ are called
the \emph{initial clauses} of $P$.
Let $t$ be an AND-tree for $P$.
A node labelled by 
$p^I(\tuplevar{x}) \leftarrow \theta$ is an \emph{initial node} of $t$.
We extend the term ``initial predicate" and use the symbol $p^I$ to 
refer also to renamed versions of the initial predicate that arise 
during clause transformations.
\end{definition}

\section{Preconditions and CHC transformations}
\label{sec:specialisation}

We limit our attention to the sets of clauses for which every derivation tree for 
 $\safe$ and $\unsafe$  (whether feasible or infeasible) has at least one initial node. 
\begin{definition}[Transforming CHCs by replacing initial states]\label{def:repis}
Let $P$ be a set of CHCs and $\phi$ be a constraint over $\theory$. Let $P'$ be the set of 
clauses obtained from $P$ by replacing the initial clauses 
$\{(p^I(\tuplevar{x}) \leftarrow \theta_i) \mid 1 \le i \le k\}$ by 
$\{(p^I(\tuplevar{x}) \leftarrow  \phi) \}$.   This operation is denoted by $\replace(P,\phi)$.
\end{definition}

\begin{definition}[NP and SP for safety]\label{def:nec-suff-precond}
Let $P$ be a set of CHCs and  $\phi$ be a constraint over $\theory$. 
Then 
\begin{itemize}
\item $\phi$ is an NP for safety of $P$ $(\necpre{s}{(P)})$ if  
$P \vdash_\theory \safe$ entails $\replace(P,\phi) \vdash_\theory \safe$. 
In other words, $\phi$ (possibly $\trueit$) is an over-approximation of the set of 
initial states of $P$ that can reach \safe. 

\item $\phi$ is an SP for safety of $P$ $(\suffpre{s}{(P)})$ if $\replace(P,\phi) \not \vdash_\theory \unsafe$.  In other words, $\phi$ (possibly $\falseit$) represents an under-approximation of the set of initial states of $P$ that cannot reach \unsafe. 
\end{itemize}

%

\end{definition}
Thus an SP for safety is a constraint that  
is sufficient to  block derivations of $\unsafe$ (given that we assume 
clauses for which $p^I$ is essential for any derivation of  $\unsafe$). However in practice, we would like to consider SP for safety as a constraint that  
is sufficient to allow derivations of \safe~and block derivations of $\unsafe$.
Analogously, we can define both  NP ($\necpre{u}{(P)}$) and SP ($\suffpre{u}{(P)}$)  for unsafety.
In the following, we  show how an NP and an SP can be derived from a set of clauses.

\begin{lemma}[NP extracted from  clauses] 
\label{def:necsafe} 
Let $P$ be a set of clauses encoding reachable states of a program. 
Then the  formula 
$
\necpre{s}(P)= \bigvee \{\theta \mid  (p^I(\tuplevar{x}) \leftarrow \theta) \in P\} 
$ is an NP for safety. 
\end{lemma}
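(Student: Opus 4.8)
The plan is to unfold the definition of an NP for safety (Definition~\ref{def:nec-suff-precond}) and show that the disjunction $\necpre{s}(P) = \bigvee\{\theta \mid (p^I(\tuplevar{x}) \leftarrow \theta) \in P\}$ satisfies it. Concretely, writing $\phi$ for this disjunction, I must show that if $P \vdash_\theory \safe$ then $\replace(P,\phi) \vdash_\theory \safe$. Since $\replace(P,\phi)$ simply replaces all the initial clauses $p^I(\tuplevar{x}) \leftarrow \theta_i$ by the single clause $p^I(\tuplevar{x}) \leftarrow \phi$, and $\phi$ is precisely the disjunction of the old $\theta_i$, the key observation is that the replacement does not remove any initial states: every $\theta_i$ implies $\phi$, so every initial state reachable in $P$ remains reachable in $\replace(P,\phi)$.

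The cleanest route is to argue at the level of feasible AND-trees (Definition~\ref{def:andtree}). First I would note that $P \vdash_\theory \safe$ means there is a feasible derivation tree $t$ with root labelled $\safe$. By the standing assumption that every derivation tree for $\safe$ has at least one initial node, $t$ contains some initial node labelled $p^I(\tuplevar{x}) \leftarrow \theta_i$ for some $i$. Then I would construct a corresponding feasible tree $t'$ for $\replace(P,\phi)$ by replacing each initial node's leaf clause $p^I(\tuplevar{x}) \leftarrow \theta_i$ with the new initial clause $p^I(\tuplevar{x}) \leftarrow \phi$, leaving the rest of the tree structure and renamings intact. The resulting $t'$ is a valid AND-tree for $\replace(P,\phi)$, and I must check it is still feasible: $\constr(t')$ is obtained from $\constr(t)$ by replacing each occurrence of $\theta_i$ (at an initial node) with $\phi$. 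Since $\theta_i \models \phi$ (because $\theta_i$ is one of the disjuncts of $\phi$), the conjunction $\constr(t)$ entails $\constr(t')$, so satisfiability of $\constr(t)$ gives satisfiability of $\constr(t')$. Hence $t'$ witnesses $\replace(P,\phi) \vdash_\theory \safe$.

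The main subtlety — and where I would be most careful — is the relationship between the syntactic notion $P \vdash_\theory \safe$ (existence of a feasible derivation tree for $\safe$) and the model-theoretic statement of Definition~\ref{def:nec-suff-precond}. The definition phrases NP in terms of $\vdash_\theory$, so I should fix once and for all that $P \vdash_\theory \safe$ abbreviates ``there is a feasible AND-tree rooted at $\safe$'', which is the standard CLP reading; the argument above then goes through entirely at the proof-tree level and no delicate appeal to least models is needed. A second point to verify is that $\replace$ affects only the initial clauses, so all non-initial clauses (hence all internal nodes and all non-initial leaves) of $t$ are also available in $\replace(P,\phi)$ and can be reused verbatim in $t'$; this is immediate from Definition~\ref{def:repis}. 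With these two points pinned down, the construction $t \mapsto t'$ together with the entailment $\theta_i \models \phi$ yields the claim, confirming that $\necpre{s}(P)$ is indeed an over-approximation of the initial states that can reach $\safe$, as required.
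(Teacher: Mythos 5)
Your proposal is correct and follows essentially the same route as the paper's own proof sketch: take a feasible AND-tree for $\safe$ in $P$, observe that each initial node's constraint $\theta_i$ is a disjunct of (hence entails) $\necpre{s}(P)$, and conclude the corresponding tree is feasible in $\replace(P,\necpre{s}(P))$. You are merely more explicit than the paper about constructing the modified tree $t'$ and checking the entailment $\constr(t)\models\constr(t')$, which is a welcome tightening rather than a different argument.
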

\begin{proof}[sketch]
 Let $P'= \replace(P,\necpre{s}(P))$. 
 To show: $P \vdash_\theory \safe$ entails $P' \vdash_\theory \safe$.

\noindent
Let $P \vdash_\theory \safe$. Then there exists a derivation tree $t$ rooted at \safe~such that $\constr(t)$ is satisfiable in $\theory$. From our assumption, $t$ uses some initial nodes at its leaves. Since all  constraints from initial nodes of $P$ are in $\necpre{s}(P)$ and are used in constructing initial clauses of $P'$, $t$ is a feasible derivation of   $P'$ as well. Then we have  
 $P' \vdash_\theory \safe$. \qed
\end{proof}
An NP for unsafety  can be obtained analogously.
Given NPs, we can compute  \emph{sufficient preconditions} for safety and unsafety as follows. $
\suffpre{s}{(P)}=\necpre{s}(P) \wedge  \neg \necpre{u}(P)$ and similarly $\suffpre{u}{(P)}=\necpre{u}(P) \wedge  \neg \necpre{s}(P).
$ A  precondition is \emph{optimal} if  $\necpre{s}(P) \wedge   \necpre{u}(P)$ is \emph{unsatisfiable}. Then we have $\suffpre{s}{(P)}=\necpre{s}(P)$, that is,  the necessary and the sufficient conditions are the same for the validity of the assertion. Next, we show how a given set of SPs can be combined to derive a new SPs for a program. 

\begin{proposition}[Composing  Preconditions]
\label{prop:disj_precond}
Let $\phi_1, \phi_2, \ldots $ be a (possibly infinite) set of formulas such that each $\phi_i$ is an SP for (un)safety of  $P$. Then their disjunction $\bigvee_i \phi_i$ is also an SP  for (un)safety of $P$. 
\end{proposition}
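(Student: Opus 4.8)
The plan is to argue by contradiction, reducing a derivation in the disjunctive program to a derivation in one of the component programs. Both cases are handled uniformly: let $q$ denote the predicate an SP must block, namely $q=\unsafe$ for safety and $q=\safe$ for unsafety, so that ``$\phi$ is an SP'' reads $\replace(P,\phi)\not\vdash_\theory q$. Writing $\phi=\bigvee_i\phi_i$, I want $\replace(P,\phi)\not\vdash_\theory q$ given that $\replace(P,\phi_i)\not\vdash_\theory q$ for every $i$.

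First I would suppose, for contradiction, that $\replace(P,\phi)\vdash_\theory q$. By definition this yields a feasible derivation tree $t$ rooted at $q$ in $\replace(P,\phi)$, so $\constr(t)$ is satisfiable over $\theory$; fix a satisfying assignment $\sigma$. The only clauses in which $\replace(P,\phi)$ and each $\replace(P,\phi_i)$ differ are the initial clauses, whose single constraint is now $\bigvee_i\phi_i$. Assuming $t$ has a unique initial node $n$, with variables $\tuplevar{x}^n$, I can factor $\constr(t)=C\wedge\bigvee_i\phi_i(\tuplevar{x}^n)$, where $C$ collects the constraints of all the remaining nodes. As $\sigma$ satisfies this conjunction it satisfies $\bigvee_i\phi_i(\tuplevar{x}^n)$, hence $\sigma\models\phi_j(\tuplevar{x}^n)$ for some single index $j$; it is this one-witness step that makes a possibly infinite disjunction harmless.

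Next I would transplant $t$ into $\replace(P,\phi_j)$: relabel the initial node $n$ with the constraint $\phi_j(\tuplevar{x}^n)$ and leave every other node untouched. Since the non-initial clauses coincide in the two programs and the initial clause of $\replace(P,\phi_j)$ is exactly $p^I(\tuplevar{x})\leftarrow\phi_j$, the relabelled tree $t'$ is a legitimate derivation tree for $q$ in $\replace(P,\phi_j)$ with $\constr(t')=C\wedge\phi_j(\tuplevar{x}^n)$. This is satisfied by $\sigma$, so $t'$ is feasible and $\replace(P,\phi_j)\vdash_\theory q$, contradicting that $\phi_j$ is an SP. As the whole argument is uniform in $q$, it settles both the safety and the unsafety cases at once.

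The step I expect to be the main obstacle is exactly the factorisation $\constr(t)=C\wedge\bigvee_i\phi_i(\tuplevar{x}^n)$, which presumes $t$ has a single initial node. If a $q$-derivation could contain two or more initial nodes, $\sigma$ might validate a different disjunct at each of them, with no common index $j$ witnessing all simultaneously; distributing the disjunctions then yields ``mixed'' conjunctions over distinct disjuncts that the individual SP hypotheses do not exclude, and the reduction breaks down. The argument therefore relies on each derivation of $q$ having exactly one initial node, which holds for the linear, transition-system CHC encodings considered here, where a derivation traces a single run from a single initial state. To keep the statement self-contained I would either invoke this structural property explicitly or strengthen the standing assumption from ``at least one initial node'' to ``exactly one initial node''.
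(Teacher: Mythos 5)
Your proof takes essentially the same route as the paper's: both reduce a hypothetical feasible derivation of the blocked predicate in $\replace(P,\bigvee_i\phi_i)$ to a feasible derivation in some single $\replace(P,\phi_j)$, using the fact that the programs differ only in their initial clauses. The caveat you flag at the end is real, and the paper's sketch glosses over it: it argues only that ``any derivation of $\unsafe$ needs to use at least one of the initial clauses'' and concludes infeasibility, which silently assumes a single index $j$ witnesses every initial node of the tree simultaneously. With two initial nodes this can fail outright --- take a body constraint $x_1=0 \wedge x_2=1$ over two copies of $p^I$, with $\phi_1\equiv(x=0)$ and $\phi_2\equiv(x=1)$: each $\phi_i$ is an SP, but their disjunction is not. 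So your strengthening of the standing assumption from ``at least one'' to ``exactly one'' initial node per derivation (which does hold for the transition-system encodings the paper targets) is precisely the hypothesis the argument needs, and making it explicit improves on the paper's own sketch.
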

\begin{proof}[sketch]
Consider the case for safety, that is, each $\phi_i$ ($i=1,2,\ldots$) is an SP for safety of $P$. Let $P'_i=\replace(P, \phi_i)$.  For each $\phi_i$, we have $P'_i \not \vdash_\theory \unsafe$. That is,  \unsafe~is not derivable from any of $P'_i$. Let $P'=\replace(P, {\bigvee_i \phi_i})$. The initial clauses of $P'$ are the union of all the initial clauses of each $P'_i$ while the rest of the clauses are the same. From our assumption, any derivation of \unsafe~need to use at least one of the initial clauses, our hypothesis under this condition ensures that such a derivation  either does not exist or is  infeasible.
Then obviously,  \unsafe~is not derivable from $P'$. The proof for unsafety can be done analogously. \qed
\end{proof}
Next, we present a special kind of under-approximation of CHCs by restricting their initial states which is useful in partitioning the set of CHCs.  We say a set of CHCs $P'$ is an under-approximation of $P$ if for an atom $A$, we have $P' \vdash_{\theory} A$ entails 
$P \vdash_{\theory} A$ (the reverse is not necessarily true).
\begin{definition}[Under-approximating CHCs by restricting initial states]\label{def:ris}
Let $P$ be a set of CHCs and $\phi$ be a constraint over $\theory$. Let $P'$ be the set of 
clauses obtained from $P$ by replacing the initial clauses 
$\{(p^I(\tuplevar{x}) \leftarrow \theta_i) \mid 1 \le i \le k\}$ by 
$\{(p^I(\tuplevar{x}) \leftarrow  \theta_i \wedge \phi) \mid 1 \le i \le k\}$.   This operation is denoted by $\ris(P,\phi)$.
\end{definition}
\begin{proposition}
\label{prop:under-approx}
Let $P$ be a set of CHCs, $\phi$ be any constraint over $\theory$ and $P'=\ris(P,\phi)$. Then $P'$ is an under-approximation of $P$.
\end{proposition}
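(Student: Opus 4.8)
The plan is to prove $P' = \ris(P,\phi)$ is an under-approximation of $P$ by showing that every derivation in $P'$ can be mapped to a derivation in $P$. By definition, I must establish that for any atom $A$, if $P' \vdash_\theory A$ then $P \vdash_\theory A$. The natural approach is to use the AND-tree characterisation of derivability: $P' \vdash_\theory A$ means there is a feasible AND-tree $t'$ for $P'$ rooted at $A$, and I want to construct from it a feasible AND-tree $t$ for $P$ rooted at the same atom $A$.

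First I would observe that $P$ and $P'$ differ only in their initial clauses: every non-initial clause of $P$ appears verbatim in $P'$, while each initial clause $p^I(\tuplevar{x}) \leftarrow \theta_i$ of $P$ is replaced by the strengthened clause $p^I(\tuplevar{x}) \leftarrow \theta_i \wedge \phi$. So given a feasible AND-tree $t'$ for $P'$, I would build $t$ by keeping the tree structure and all labels identical, except that at each initial node — which in $t'$ is labelled by some $p^I(\tuplevar{x}) \leftarrow \theta_i \wedge \phi$ — I relabel the node with the original initial clause $p^I(\tuplevar{x}) \leftarrow \theta_i$ of $P$. Every node of $t$ then corresponds to a genuine (renamed) clause of $P$, so $t$ is a legitimate AND-tree for $P$.

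The remaining step is feasibility. I need $\constr(t)$ to be satisfiable in $\theory$ given that $\constr(t')$ is. Since $\constr(t')$ is the conjunction of $\constr(t)$ together with the extra conjuncts $\phi$ contributed at the (one or more) initial nodes, we have that $\constr(t')$ logically entails $\constr(t)$; hence any satisfying assignment for $\constr(t')$ also satisfies $\constr(t)$, and $t$ is feasible. Therefore $P \vdash_\theory A$, which is exactly what under-approximation requires.

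The main obstacle, such as it is, is essentially bookkeeping rather than conceptual: I must be careful about the case analysis of whether an initial predicate node sits at a leaf or internally, and about variable renaming so that the relabelling is coherent across the tree. I also rely implicitly on the soundness and completeness of the AND-tree semantics with respect to $\vdash_\theory$, which I would invoke from the preliminaries (Definitions of AND-trees and feasibility). A subtlety worth flagging is that the argument is genuinely one-directional: strengthening initial constraints can only remove feasible derivations, never create new ones, which is precisely why the reverse inclusion fails and we obtain an under-approximation rather than an equivalence.
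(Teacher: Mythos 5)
Your proof is correct and follows exactly the idea behind the paper's own (one-line) proof: the paper simply declares the claim obvious because $P'$ is $P$ with its initial clauses strengthened, and your tree-relabelling argument — replacing each initial-node label $\theta_i \wedge \phi$ by $\theta_i$ and observing that satisfiability of the stronger conjunction transfers to the weaker one — is precisely the expansion of that observation. No gap; you have just written out the details the paper leaves implicit.
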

\begin{proof}[sketch]
This is obvious since $P'$ is initial clauses restricted version of $P$. \qed
\end{proof}

Note that $\ris(P,\phi)$ only restricts the initial clauses of $P$ while keeping others intact. Then Definition \ref{def:nec-suff-precond} implies that an SP of $\ris(P,\phi)$ is also an SP of $P$. Formally,  
\begin{proposition}
\label{prop:lift_precond}
Let $P$ be a set of CHCs, $\phi$ be any constraint over $\theory$ and $P'=\ris(P,\phi)$. Let $\psi$ be an SP for (un)safety of $P'$ then $\psi$ is also a  SP for (un)safety of $P$.
\end{proposition}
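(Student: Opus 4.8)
The plan is to reduce the proposition to a single syntactic identity between transformed programs and then read off the result directly from the definition of SP (Definition~\ref{def:nec-suff-precond}). The key observation is that applying $\replace(\cdot,\psi)$ to $P'$ and to $P$ yields the \emph{same} set of clauses, after which there is nothing left to prove.

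First I would unfold the two transformations. By Definition~\ref{def:ris}, $P'=\ris(P,\phi)$ differs from $P$ only in its initial clauses: $P$ has $\{(p^I(\tuplevar{x}) \leftarrow \theta_i) \mid 1 \le i \le k\}$, whereas $P'$ has $\{(p^I(\tuplevar{x}) \leftarrow \theta_i \wedge \phi) \mid 1 \le i \le k\}$, and all non-initial clauses are shared between the two programs. (Each $\theta_i \wedge \phi$ is again a constraint in $\theory$, so these remain constrained facts with head $p^I$, i.e.\ genuine initial clauses in the sense of Definition~\ref{def:initial-node}.) Now apply $\replace(\cdot,\psi)$. By Definition~\ref{def:repis}, $\replace$ discards \emph{all} prior initial clauses and installs the single clause $p^I(\tuplevar{x}) \leftarrow \psi$, irrespective of what those constraints were. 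Hence $\replace(P',\psi)$ and $\replace(P,\psi)$ have the same single initial clause and, since $\ris$ left the non-initial clauses untouched, the same non-initial clauses; so $\replace(P',\psi)=\replace(P,\psi)$ as sets of clauses.

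With this identity in hand the conclusion is one line. For safety, $\psi$ being an SP of $P'$ means $\replace(P',\psi)\not\vdash_{\theory}\unsafe$; substituting the identity gives $\replace(P,\psi)\not\vdash_{\theory}\unsafe$, i.e.\ $\psi$ is an SP for safety of $P$. The unsafety case is symmetric, replacing $\unsafe$ by $\safe$ throughout, and in fact the argument yields the stronger biconditional. The only step requiring real care --- the ``hard part'', such as it is --- is justifying the equality $\replace(P',\psi)=\replace(P,\psi)$: it relies precisely on $\ris$ modifying nothing outside the initial clauses and on $\replace$ \emph{overwriting} the initial clauses wholesale rather than merging them, so that the extra conjunct $\phi$ is erased the instant $\replace(\cdot,\psi)$ is applied. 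Both properties hold by inspection of Definitions~\ref{def:ris} and~\ref{def:repis}, but they are exactly what the whole proof hinges on; if $\replace$ combined rather than replaced initial clauses, one would instead have to argue about feasibility of derivation trees directly.
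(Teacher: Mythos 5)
Your proof is correct and takes essentially the same route as the paper's own sketch: both hinge on the single observation that $\replace(P',\psi)=\replace(P,\psi)$ because $\ris$ touches only the initial clauses and $\replace$ discards them wholesale, after which the SP property transfers verbatim. Your write-up is in fact more careful than the paper's (which contains a slip, writing $\replace(P,\phi)$ where $\replace(P,\psi)$ is meant), but the underlying argument is identical.
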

\begin{proof}[sketch]
Consider the case for safety. Let $\psi$ be an SP for safety of $P'$. Since $P$ and $P'$ differ only in the set of initial clauses, both the constructions $\replace(P,\phi)$ and $\replace(P',\phi)$ yield the same set of clauses. So $\psi$ is an SP for safety of $P$ as well. The proof for the case of unsafety can be done analogously. \qed
\end{proof}
We now present some CHC transformations, taken from the literature on CLP and Horn clause verification. These are well known transformations and we refer to ~\cite{kafle-iclp18} for details.

\subsubsection{Partial Evaluation (PE).}
\label{pe}

Partial evaluation  \cite{Jones-Gomard-Sestoft} with respect to a goal specialises a program for the given goal; preserving only those derivations that are relevant for
deriving the goal. PE algorithm we apply here \cite{gallagher:pepm93,kafle-iclp18} produces a polyvariant specialisation, that is, a finite number of versions of each predicate, which is essential for deriving disjunctive information. 

The result of applying PE to the example program in Figure \ref{ex:precond} 
with respect to \unsafe~and \safe~is shown in Figure~\ref{pe_example}. The algorithm is a bit involved, so we do not provide details here and ask the readers to refer to \cite{kafle-iclp18}. However a key point to note is that due to polyvariant specialisation the predicates \emph{init} and \emph{wh} are split, leading to a more precise approximations as shown below. But all the feasible derivations of \safe~and \unsafe~are preserved.
\begin{figure}[t]
\centerline{
  \begin{tabular}{ll}
    \pcode[\small]{
$\mathtt{unsafe \leftarrow B<0,A\leq0,wh\_2(A,B)}$. \\
$\mathtt{wh\_2(A,B) \leftarrow B<0,A \leq 0,init\_2(A,B)}$. \\
$\mathtt{wh\_2(A,B) \leftarrow B<0,A=0,C=1,}$ \\
~~~~~~~~~~~~~~~~~~~$\mathtt{B-D= -1,wh\_1(C,D)}$. \\
$\mathtt{wh\_1(A,B) \leftarrow A \geq 1,init\_1(A,B)}$. \\
$\mathtt{wh\_1(A,B) \leftarrow A \geq 1,A-C= -1,}$ \\
~~~~~~~~~~~~~~~~~~~$\mathtt{B-D= -1,wh\_1(C,D)}$. \\
$\mathtt{init\_1(A,B) \leftarrow A \ge 1}.$ \\ 
$\mathtt{init\_2(A,B) \leftarrow A \le 0, B \ge 0}.$
    } &
    \pcode[\small]{
$\mathtt{safe \leftarrow B \geq 0,A\leq0,wh\_2(A,B)}$. \\
$\mathtt{wh\_2(A,B) \leftarrow B \geq 0,A \leq 0,init\_2(A,B)}$. \\
$\mathtt{wh\_2(A,B) \leftarrow B \geq 0,A=0,C=1,}$ \\
~~~~~~~~~~~~~~~~~~~$\mathtt{B-D= -1,wh\_1(C,D)}$. \\
$\mathtt{wh\_1(A,B) \leftarrow A \geq 1,B \geq 0,init\_1(A,B)}$. \\
$\mathtt{wh\_1(A,B) \leftarrow B \geq 0,A \geq 1,A-C= -1,}$ \\
~~~~~~~~~~~~~~~~~~~$\mathtt{B-D= -1,wh\_1(C,D)}$. \\
$\mathtt{init\_1(A,B) \leftarrow A \geq 1,B \geq 0}.$ \\ 
$\mathtt{init\_2(A,B) \leftarrow B \geq 0,A \leq 0}.$
} 
  \end{tabular}
}
\caption{Partially evaluated programs: wrt \texttt{unsafe} (left) and wrt \texttt{safe} (right).\label{pe_example}}
\end{figure}
We derive the following NPs from these programs: 
\[
\begin{array}{lcl}
   \necpre{s} \equiv \mathtt{(B\ge0 \wedge A\leq 0) \vee 
	(A\geq 1 \wedge B\geq 0) \equiv B\geq 0}. 
\\ \necpre{u} \equiv \mathtt{(B<0 \wedge A\leq 0) \vee 
	A\geq 1 \equiv B<0 \vee A\geq 1}.
\end{array}
\]

\subsubsection{Constraint Specialisation (CS) \cite{DBLP:journals/scp/KafleG17}.}
\label{cs}

CS strengthens constraints in the clauses, while preserving
derivations of a given atom (goal). The transformation prunes those paths that are not relevant for deriving the given atom. Next we explain the transformation with an example.



Figure \ref{cs_tiny_example} (left) shows an example program and Figure \ref{cs_tiny_example} (right) its constraint specialised version preserving the derivation of \unsafe. The strengthened constraints are obtained by recursively propagating $\mathtt{B \ge A}$ top-down from the goal $\mathtt{unsafe}$ and $\mathtt{A = 1, B=1}$ bottom-up from the constrained fact. An invariant $\mathtt{A \ge B, B \ge 0}$ for the recursive predicate $\texttt{p(A,B)}$ in derivations of $\unsafe$ is computed and conjoined to each call to $\texttt{p}$ in the clauses (underlined in the clauses in Figure \ref{cs_example} (right)).

\newcommand{\myleftarrow}{\leftarrow}

\begin{figure}[h]
\centerline{
  \begin{tabular}{ll}
    \pcode[\small]{
$\mathtt{unsafe \myleftarrow B>A, p(A,B)}$. \\
$\mathtt{p(A+B,B+1) \myleftarrow  p(A,B)}$.\\
$\mathtt{p(A,B) \myleftarrow A=1, B=0}$.
    } &
    \pcode[\small]{
    $\mathtt{unsafe \myleftarrow B>A, \underline{B\geq 0, A \geq B}, p(A,B)}$. \\
$\mathtt{p(A+B,B+1) \myleftarrow \underline{B\geq 0, A \geq B}, p(A,B)}$.\\
$\mathtt{p(A,B) \myleftarrow A=1, B=0,\underline{B\geq 0, A \geq B}}$.
} 
  \end{tabular}
}
\caption{Example program (left) and its constraint specialised version preserving the derivation of \unsafe~ (right).\label{cs_tiny_example}}
\end{figure}
\noindent

The result of applying CS to the program in Figure \ref{pe_example} (left) with respect to \unsafe~and Figure \ref{pe_example} (right) with respect to \safe~is  shown in Figure \ref{cs_example}.
\begin{figure}[b]
\centerline{
  \begin{tabular}{ll}
    \pcode[\small]{
$\mathtt{unsafe \leftarrow  B \le 0,  A<0,  wh\_2(B,A)}$. \\ 
$\mathtt{wh\_2(A,B) \leftarrow  B<0,  A \le 0, init\_2(A,B)}$. \\ 
$\mathtt{wh\_2(A,B) \leftarrow B<0,  A=0,  C=1,}$  \\
~~~~~~~~~~~~~~~~$\mathtt{B-D= -1, wh\_1(C,D)}$. \\ 
$\mathtt{wh\_1(A,B) \leftarrow  A>B,  A \ge 1, init\_1(A,B)}$. \\ 
$\mathtt{wh\_1(A,B) \leftarrow  A>B,  A \ge 1,  A-C= -1, }$  \\
~~~~~~~~~~~~~~~~$\mathtt{ B-D= -1, wh\_1(C,D)}$.  \\
$\mathtt{init\_1(A,B) \leftarrow A>B,  A \ge 1}.$ \\
 $\mathtt{init\_2(A,B) \leftarrow B < 0,A \leq 0}.$ 
    } &
    \pcode[\small]{
$\mathtt{safe \leftarrow B \ge 0,A \le 0,wh\_2(A,B)}$. \\ 
$\mathtt{wh\_2(A,B) \leftarrow B \ge 0,A \le 0,init\_2(A,B)}$. \\ 
$\mathtt{wh\_2(A,B) \leftarrow B \ge 0,A=0,C=1,}$ \\
~~~~~~~~~~~~~~~~$\mathtt{B-D= -1,wh\_1(C,D)}$. \\ 
$\mathtt{wh\_1(A,B) \leftarrow A \ge 1,B \ge A,init\_1(A,B)}$. \\ 
$\mathtt{wh\_1(A,B) \leftarrow B \ge 0,A \ge 1,A-C= -1,}$ \\
~~~~~~~~~~~~~~~~$\mathtt{B-D= -1,wh\_1(C,D)}$. \\
$\mathtt{init\_1(A,B) \leftarrow A \ge 1,B \ge A}.$ \\
 $\mathtt{init\_2(A,B) \leftarrow B \geq 0,A \leq 0}.$
} 
  \end{tabular}
}
\caption{Constraint specialised programs: wrt \texttt{unsafe} (left) and wrt \texttt{safe} (right)\label{cs_example}}
\end{figure}
Then we derive the following NPs: $\necpre{s} \equiv \mathtt{(B\ge0 \wedge A\leq 0) \vee (A\geq 1 \wedge B\geq A)}$ and $\necpre{u} \equiv \mathtt{(B<0 \wedge A\leq 0) \vee (A\geq 1 \wedge A> B)}$.
Since $\necpre{s} \wedge \necpre{u}$ is \emph{unsatisfiable}, these preconditions are optimal.

\subsubsection{Trace Elimination (TE).}
\label{te}
TE is a program transformation that eliminates an AND-tree of $P$ from $P$ while preserving the rest of its AND-trees.  We illustrate the transformation via the program in Figure \ref{te_example}. The program on the right is the result of eliminating the AND-tree $c_1-c_3$ from the program on the left. Note that the elimination has caused splitting of the predicate $p$ so that the same AND-tree is no longer possible to construct while the rest of the AND-trees are preserved. In summary, the method consists of representing the program and the trace as finite tree automata, removing the trace using automata difference construction and reconstructing a program from the result as detailed in \cite{DBLP:journals/cl/KafleG17}.
\begin{figure}[H]
\centerline{
  \begin{tabular}{ll}
    \pcode[\small]{
$\mathtt{c_1. ~unsafe \myleftarrow B>A, p(A,B)}$. \\
$\mathtt{c_2.~ p(A+B,B+1) \myleftarrow  p(A,B)}$.\\
$\mathtt{c_3. ~p(A,B) \myleftarrow A=1, B=0}$.
    } &
    \pcode[\small]{
   $\mathtt{unsafe \myleftarrow B>A, p\_1(A,B)}$. \\
   $\mathtt{p\_1(A+B,B+1) \myleftarrow  p(A,B)}$.\\
$\mathtt{p(A+B,B+1) \myleftarrow  p(A,B)}$.\\
$\mathtt{p(A,B) \myleftarrow A=1, B=0}$.
} 
  \end{tabular}
}
\caption{A program (left) and its version after removing the trace $c_1-c_3$ (right).\label{te_example}}
\end{figure}
TE is a useful transformation since it can act as a refinement or problem decomposition as we shall see later. Furthermore, the splitting of the  predicates can help derive disjunctive information. While the elimination of infeasible trees does not have any effect on preconditions, extra care must be taken while eliminating feasible ones. Lemma \ref{lemma:feasible} allows us to derive a safe precondition in this case. To be consistent with the rest of the transformations, we can define TE which takes a program $P$ and an atom $A$ and searches for a derivation of $P$ rooted at $A$ and eliminates it if any.

\begin{definition}[$\theta_t$]
\label{def:trace_cs_init}
Let $P$ be a set of CHCs and $t$ be a \emph{feasible} AND-tree  from $P$ for (un)safe. 
Let $p^I(\tuplevar{x})$ be the atom label of an initial node of $t$ then $\theta_t = \constr(t)\vert_\tuplevar{x}$. 
\end{definition}
In other words, $\theta_t$ is basically the result of  eliminating all  variables but $\tuplevar{x}$ from $\constr(t)$. Note that  $\theta_t$ is necessary condition  for $t$ to be feasible.  

\begin{lemma}[\cite{kafle-iclp18}]\label{lemma:feasible}
Let $P'$ be the result of eliminating a \emph{feasible} AND-tree $t$ for $\mathtt{safe}~(\mathtt{unsafe})$ from $P$. 
Then $\necpre{s}(P) =\necpre{s}(P') \vee \theta_t$ ($\necpre{u}(P) =\necpre{u}(P') \vee \theta_t$), where $\theta_t$ is a constraint extracted from $t$ (Def. \ref{def:trace_cs_init}).
\end{lemma}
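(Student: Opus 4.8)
The plan is to characterise $\necpre{s}(P)$ semantically in terms of feasible derivation trees and then exploit the fact that Trace Elimination removes exactly the tree $t$ while preserving every other tree together with its constraint.

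First I would recall, from Definition~\ref{def:nec-suff-precond}, that $\necpre{s}(P)$ denotes the set of initial states of $P$ that can reach $\safe$. An initial state $\tuplevar{x}=v$ can reach $\safe$ exactly when there is a feasible AND-tree rooted at $\safe$ whose initial node is labelled $p^I(\tuplevar{x})$ and such that $v$ satisfies $\theta_t = \constr(t)\vert_\tuplevar{x}$ (Definition~\ref{def:trace_cs_init}); indeed, projecting $\constr(t)$ onto $\tuplevar{x}$ yields precisely the set of initial values for which $t$ admits a feasible instantiation. Writing $\mathcal{F}(P)$ for the family of feasible AND-trees rooted at $\safe$ in $P$, this gives the characterisation $\necpre{s}(P) \equiv \bigvee_{t \in \mathcal{F}(P)} \theta_t$; by the standing assumption that every derivation tree for $\safe$ has at least one initial node, each $\theta_t$ is well defined.

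Next I would invoke the correctness of Trace Elimination (the finite-tree-automata difference construction of~\cite{DBLP:journals/cl/KafleG17}), which is the crux of the argument: the transformed program $P'$ has exactly the AND-trees of $P$ other than $t$. More precisely, the predicate splitting caused by the difference construction induces a bijection, obtained by forgetting the renaming of split predicates, between the AND-trees of $P'$ and the AND-trees of $P$ distinct from $t$; this bijection preserves the tree's constraint, hence preserves both feasibility and the projected constraint $\theta_{(\cdot)}$. Consequently $\mathcal{F}(P') = \mathcal{F}(P) \setminus \{t\}$ up to this constraint-preserving correspondence.

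Finally I would combine the two disjunctions. Since $t \in \mathcal{F}(P)$, splitting off its disjunct gives $\necpre{s}(P) \equiv \bigvee_{t' \in \mathcal{F}(P)} \theta_{t'} \equiv \theta_t \vee \bigvee_{t' \in \mathcal{F}(P) \setminus \{t\}} \theta_{t'} \equiv \theta_t \vee \necpre{s}(P')$, which is the claim. The equivalence holds unconditionally, even when $\theta_t$ is already entailed by the remaining disjuncts (in which case both sides collapse to $\necpre{s}(P')$), so no subsumption analysis is needed. The unsafe case is identical, replacing $\safe$ by $\unsafe$ and $\necpre{s}$ by $\necpre{u}$. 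The hard part will be the second step: one must be certain that the trace-elimination construction removes $t$ and nothing else, and keeps the constraints of the surviving trees intact, so that the disjunction defining the necessary precondition changes only by dropping the single disjunct $\theta_t$.
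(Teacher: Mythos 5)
The paper does not actually prove this lemma: it is imported verbatim from \cite{kafle-iclp18}, and the present text only gestures at the trace-elimination machinery ``as detailed in \cite{DBLP:journals/cl/KafleG17}''. So there is no in-paper proof to match against; judged on its own, your argument is sound and is the natural one. You correctly read the equality as a statement about the exact (strongest) necessary precondition, characterise it semantically as the disjunction of $\theta_{t'}$ over all feasible AND-trees for $\safe$, and reduce the lemma to the correctness of trace elimination --- that $P'$ has precisely the AND-trees of $P$ other than $t$, via a constraint-preserving correspondence that quotients out the predicate splitting. That is exactly the load-bearing step, and both you and the paper discharge it by appeal to the FTA-difference construction, which is legitimate here. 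Two small points to tighten: (i) $\mathcal{F}(P)$ is in general infinite, so $\bigvee_{t'}\theta_{t'}$ must be read semantically (as a set of initial states) rather than as a formula of $\theory$, which is harmless for the equality but worth saying; and (ii) Definition~\ref{def:trace_cs_init} projects onto the variables of \emph{an} initial node, so your identification of $\theta_t$ with ``the initial values for which $t$ is feasibly instantiable'' silently assumes a single initial node per tree (and an exact projection operator) --- the paper makes the same tacit assumptions, so this is a shared gap rather than one you introduced. Note also that for the algorithm's soundness (Proposition~\ref{prop:sequence}) only the direction that $\necpre{s}(P')\vee\theta_t$ remains an NP of $P$ is needed, and your bijection argument gives that a fortiori.
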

Transformations such as PE, CS and TE (which removes \emph{infeasible} trees)
not only preserve the goal but also the initial clauses. This allows us  to
construct a sequence of clauses  $P_0,P_1,\ldots,P_m$ where $P = P_0$ and each
element of the sequence is more specialised than its predecessor with respect to
derivations of  \safe~($\unsafe$).  As a consequence, the NPs are more precise.
We write  $P \Longrightarrow_A P'$ when $P'$ is a goal-preserving transformation of $P$ with respect to an atom $A$, that is, $P \models A ~\text{iff}~ P' \models A$. TE (eliminating \emph{feasible} trees) is a little different which does not preserve the goal. We abuse the notation and write $P \Longrightarrow_{t_A} P'$ for transformation of $P$ by eliminating a feasible tree rooted at $A$, yielding $P'$. Lemma~\ref{lemma:feasible} shows that soundness is ensured in this case. 

Let us now wrap these transformations and their combinations.
Let \tr\ and \trseq\ be any functions satisfying the following:
\begin{align*}
\trA \tuple{P,\phi} &= \left\{
\begin{array}{cl}
			\tuple{P', \phi} & \text{where~} P \Longrightarrow_A P' \\
      \textit{or} \\
			\tuple{P', \phi'} & \text{where~} P \Longrightarrow_{t_A} P' \text{and}~ \phi'=\phi \vee \theta_{t_A} \text{(Def.\ \ref{def:trace_cs_init})}
			\end{array} \right.
\\[1ex]
\trseqA \tuple{P, \phi, n} &= \trA^n \tuple{P, \phi} \text{~for~} n\geq 1
\end{align*}
where $f^1 = f$ and $f^n = f^{n-1} \circ f$.

\trseq\ allows us to combine the above transformations in any order and Proposition \ref{prop:sequence} allows us to derive more precise preconditions from the combination.
\begin{proposition}[Adapted from Proposition 2 of \cite{kafle-iclp18}]\label{prop:sequence}
Let $P$ be a set of CHCs and $n\geq 1$. Let $\tuple{P_s,\phi_s} = \trseqA[\safe] \tuple{P,\falseit, n}$.
Then $ \models_{\theory} (\necpre{s}{(P_s)} \vee \phi_s) \rightarrow \necpre{s}{(P)}$.
Similarly, if $\tuple{P_u, \phi_u} = \trseqA[\unsafe] \tuple{P,\falseit, n}$,
then $ \models_{\theory} (\necpre{u}{(P_u)} \vee \phi_u) \rightarrow \necpre{u}{(P)}$.
\end{proposition}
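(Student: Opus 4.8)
The plan is to prove the statement by induction on $n$, the number of transformation steps, treating the safety case (the unsafety case being symmetric). The key observation is that the transformation $\trA[\safe]$ has two branches, and I need the inductive hypothesis to carry both the preservation of necessary preconditions and the accumulated disjunct $\phi$ that records contributions from eliminated feasible trees. So I would strengthen the bookkeeping into an invariant: writing $\tuple{P_i,\phi_i} = \trseqA[\safe]\tuple{P,\falseit,i}$, I claim that for every $i \geq 0$ we have $\models_\theory (\necpre{s}(P_i) \vee \phi_i) \rightarrow \necpre{s}(P)$. The base case $i=0$ is immediate since $P_0 = P$ and $\phi_0 = \falseit$, giving $\necpre{s}(P) \vee \falseit \equiv \necpre{s}(P)$.

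For the inductive step, I would assume the invariant holds at step $i$ and examine the single application of $\trA[\safe]$ that produces $\tuple{P_{i+1},\phi_{i+1}}$ from $\tuple{P_i,\phi_i}$. There are two cases matching the definition of $\tr$. In the goal-preserving case, $P_i \Longrightarrow_{\safe} P_{i+1}$ and $\phi_{i+1} = \phi_i$; here the crucial fact I would invoke is that goal-preserving transformations (PE, CS, and TE on infeasible trees) preserve both the goal $\safe$ and the initial clauses, so by Lemma~\ref{def:necsafe} the extracted necessary precondition is unchanged or refined, giving $\necpre{s}(P_{i+1}) \rightarrow \necpre{s}(P_i)$; combined with the inductive hypothesis this yields the invariant at $i+1$. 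In the feasible-tree-elimination case, $P_i \Longrightarrow_{t_\safe} P_{i+1}$ and $\phi_{i+1} = \phi_i \vee \theta_{t_\safe}$; here I would apply Lemma~\ref{lemma:feasible} directly, which states $\necpre{s}(P_i) = \necpre{s}(P_{i+1}) \vee \theta_{t_\safe}$. Substituting, $\necpre{s}(P_{i+1}) \vee \phi_{i+1} = \necpre{s}(P_{i+1}) \vee \theta_{t_\safe} \vee \phi_i = \necpre{s}(P_i) \vee \phi_i$, which by the inductive hypothesis implies $\necpre{s}(P)$.

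Taking $i = n$ then gives exactly the claimed entailment $\models_\theory (\necpre{s}(P_s) \vee \phi_s) \rightarrow \necpre{s}(P)$, and the unsafety case follows by replacing $\safe$ with $\unsafe$ throughout and using the analogous versions of Lemma~\ref{def:necsafe} and Lemma~\ref{lemma:feasible} noted in the text to hold for unsafety.

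I expect the main obstacle to lie not in the algebra of the induction but in justifying the direction of the entailment in the goal-preserving case, namely that $\necpre{s}(P_{i+1}) \rightarrow \necpre{s}(P_i)$ rather than equality or the reverse. This requires carefully using the remark preceding the statement that PE, CS, and infeasible-TE ``not only preserve the goal but also the initial clauses,'' so that each transformed program is \emph{more specialised} with respect to derivations of $\safe$, making its extracted necessary precondition tighter (a stronger formula) and hence soundly implying the looser one. I would need to make precise why specialisation strengthens $\necpre{s}$, appealing to the construction of $\necpre{s}$ as a disjunction over initial-clause constraints in Lemma~\ref{def:necsafe} together with the fact that these transformations can only restrict or split those constraints while preserving all feasible derivations of the goal; the subtlety is ensuring no feasible initial contribution is lost, which is precisely what goal-preservation guarantees.
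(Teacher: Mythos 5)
The paper does not actually prove this proposition: it is stated as ``adapted from Proposition~2 of \cite{kafle-iclp18}'' and the proof is delegated entirely to that reference, so there is no in-paper argument to compare yours against. Judged on its own terms, your induction is the natural reconstruction and is structurally sound: the invariant $\models_{\theory} (\necpre{s}(P_i) \vee \phi_i) \rightarrow \necpre{s}(P)$ is the right strengthening, the feasible-tree-elimination case is handled exactly by Lemma~\ref{lemma:feasible} (your substitution $\necpre{s}(P_{i+1}) \vee \theta_{t} \vee \phi_i = \necpre{s}(P_i) \vee \phi_i$ is correct, reading the lemma's equality as equivalence over $\theory$), and the unsafety case is indeed symmetric. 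The trivial extension of $\trseqA$ to $i=0$ for the base case is harmless.

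You have correctly located the only real load-bearing step: the claim $\models_{\theory} \necpre{s}(P_{i+1}) \rightarrow \necpre{s}(P_i)$ when $P_i \Longrightarrow_{\safe} P_{i+1}$. Be aware that the paper never proves this either --- it only asserts informally, in the paragraph preceding the proposition, that PE, CS and infeasible-TE ``preserve the goal [and] also the initial clauses'' and that ``the NPs are more precise.'' Your proposed justification is the right one: since $\necpre{s}$ is by Lemma~\ref{def:necsafe} the disjunction of the constraints of the initial clauses, it suffices that each transformation only strengthens, splits, or duplicates initial-clause constraints (PE and CS conjoin propagated constraints onto renamed copies of $p^I$; TE on infeasible trees only renames and duplicates), so every new initial-clause constraint entails some old one and the disjunctions are ordered as claimed. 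If you write this up, make that a separate lemma about the three transformations rather than an appeal to ``specialisation,'' since goal-preservation alone ($P \models \safe$ iff $P' \models \safe$) does not by itself force the initial clauses to strengthen --- that is an additional property of these particular transformations, and it is the one thing a referee would press on.
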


\section{An algorithm for precondition inference}\label{sec:precond}
Given a set of CHC transformations  and their relationships with the preconditions, we now state the algorithm for computing SPs for safety  and unsafety  in Algorithm \ref{alg:precond}.

\begin{algorithm}
\caption{Algorithm for inferring sufficient preconditions}
\label{alg:precond}
\begin{algorithmic}[1]
\State \textbf{Input}: A set of CHCs $P$ with clauses for \safe~and \unsafe and the length of the sequence of transformations $n$.
\State \textbf{Output:} Sufficient precondition for safety and unsafety wrt  \safe~and \unsafe.
\State \textbf{Initialisation:} $\suffpre{s} \gets \falseit$; $\suffpre{u} \gets \falseit$; $P_s \gets P$; $P_u \gets P$; \\
 ~~~~~~~~~~~~~~~~~~~~$\phi_{old} \gets \necpre{s}{(P)}$ (Definition \ref{def:necsafe});
\While{true}
\State
\begin{tabular}{|l||l|}
\hline
 $\tuple{P_s, \theta_s} \gets \trseqA[\safe]\tuple{P_s,\falseit, n}$; ~~~~&  ~~~~$\tuple{P_u, \theta_u} \gets \trseqA[\unsafe]\tuple{P_u,\falseit, n}$; \\ 
 $\phi_s \gets \necpre{s}{(P_s)} \vee \theta_s$ ~~~~& ~~~~$\phi_u \gets \necpre{u}{(P_u)} \vee \theta_u$ \\ \hline
\end{tabular}
\State $\phi_{new}\gets \phi_s \wedge \phi_u$
\If{$\phi_{new}\equiv \falseit$ (optimal condition reached)} 
\State
\begin{tabular}{|l||l|}
\hline
 $\suffpre{s} \gets \suffpre{s} \vee \phi_s$ ~~~~& ~~~~  $\suffpre{u} \gets \suffpre{u} \vee \phi_u$ \\ \hline
 \end{tabular}
\State \Return $\tuple{\suffpre{s},{\suffpre{u}}}$

\ElsIf{$\phi_{old} \rightarrow \phi_{new}$ (approximation does not get stronger)}

\State 
\begin{tabular}{|l||l|}
\hline 
$\suffpre{s} \gets \suffpre{s} \vee (\phi_s \wedge \neg \phi_u )$ ~~~~& ~~~~ $\suffpre{u} \gets \suffpre{u} \vee (\phi_u \wedge \neg \phi_s)$ \\ \hline
\end{tabular}
\State \Return $\tuple{\suffpre{s},{\suffpre{u}}}$

\Else ~(refine clauses via under-approximation using Def. \ref{def:ris})
\State $\phi_{old} \gets \phi_{new}$
\State 
\begin{tabular}{|l||l|}
\hline
$\suffpre{s} \gets \suffpre{s} \vee (\phi_s \wedge \neg \phi_u )$; ~~~~&  ~~~~ $\suffpre{u} \gets \suffpre{u} \vee (\phi_u \wedge \neg \phi_s)$; \\
$P_s \gets \ris(P_s, \phi_{new})$~~~~&~~~~ $P_u \gets \ris(P_u, \phi_{new})$ \\ \hline
\end{tabular}
\EndIf
\EndWhile
\end{algorithmic}
\end{algorithm}
The algorithm takes a set of CHCs as input (with special predicates encoding the sets of safe  and  unsafe states) and a positive number $n$ representing the length of a sequence of transfomations and returns a pair of SPs for safety and unsafety. The SPs $\suffpre{s}$ and $\suffpre{u}$ are initialised to $\falseit$. The algorithm aims to weaken these SPs as much as possible. $\phi_{old}$ keeps track of the set of initial states that are yet to be proven safe or unsafe and is initialised to the set of initial states of $P$. $P_s$ and $P_u$ keep track of the transformations of $P$ with respect to \safe~and \unsafe~respectively, and also their under-approximations.

\begin{figure}[ht!]
\centering
\begin{minipage}{.32\textwidth}
\tikz \node[scale=0.5, inner sep=0]{
\begin{tikzpicture}
[align=center,node distance=4cm and 6cm]
  \tikzstyle{ellipse} = [ellipse, minimum height=8em, minimum width=5em, draw,thick]
\tikzset{dot/.style={circle,fill=#1,inner sep=0,minimum size=4pt}}
  \tikzstyle{block} = [rectangle, rounded corners, minimum height=1em, minimum width=6em, draw]
 
\coordinate (safeC) at (-3,-3);
\coordinate[right of=safeC]  (unsafeC) ;
\coordinate[above of=safeC] (isafeC) ;
\coordinate[above of=unsafeC] (iunsafeC) ;
  
  \matrix {

  \begin{scope}[fill opacity=0.5]
   \draw[green, fill,rounded corners=.5mm, name=safe]  (safeC) \irregularcircle{1cm}{1mm} {} ;
   \node[below,black, opacity=1.0] at (safeC.south){$\mathtt{Safe}$};
    \draw[red,fill, rounded corners=.5mm, name=unsafe]  (unsafeC) \irregularcircle{1cm}{1mm} {};
    \node[below,black, opacity=1.0] at (unsafeC.south){$\mathtt{Unsafe}$};
    \draw[green,fill,rounded corners=.5mm, name=isafe]  (isafeC) \irregularcircle{1cm}{1mm} {} ;
    \node[above,black, opacity=1.0] at (isafeC.south){$\mathtt{Safe^I}$};
    \draw[red,fill, rounded corners=.5mm, name=iunsafe]  (iunsafeC) \irregularcircle{1cm}{1mm} {};
     \node[above,black, opacity=1.0] at (iunsafeC.south){$\mathtt{Unsafe^I}$};
    
   \node [dot] (s1) at (-3,-2.2) {};
  \node[dot] (s2) at (-3.6,-2.5){};
  \node[dot] (s3) at (-2.4,-2.5){};
  
  \node [dot] (is1) [above of=s1, yshift=-30pt]{};
  \node[dot] (is2) [above of=s2, yshift=-30pt]{};
  \node[dot] (is3) [above of=s3, yshift=-30pt]{};
  
  \node [dot] (iu1) [right of=is1]{};
  \node[dot] (iu2) [right of=is2]{};
  \node[dot] (iu3) [right of=is3]{};
  
    \node [dot] (u1) [right of=s1]{};
  \node[dot] (u2) [right of=s2]{};
  \node[dot] (u3) [right of=s3]{};
      \draw [<-, thick,snake=coil,segment aspect=10,segment length=17pt] (is2) -- (s2)  {} ;
       \draw [<-, thick,snake=coil,segment length=30pt] (is1) -- (s1)  {} ;
       \draw [<-, thick,snake=coil,segment length=40pt] (is3) -- (s3)  {} ;
       
           \draw [<-, thick,snake=coil,segment length=30pt] (iu1) -- (u1) node[midway, left] {} ;
      \draw [<-, thick,snake=coil,segment length=40pt] (iu2) -- (u2) node[midway, left] {} ;
       \draw [<-, thick,snake=coil,segment aspect=10,segment length=17pt] (iu3) -- (u3) node[midway, left] {} ;

    \end{scope}
\\};

\begin{pgfonlayer}{background}
\draw [draw=red]
([xshift=-27pt,yshift=-15pt]safeC.south)

rectangle ([xshift=75pt,yshift=70pt]iunsafeC.north);
\end{pgfonlayer}
\end{tikzpicture}
};
\end{minipage}
\begin{minipage}{.32\textwidth}
\tikz \node[scale=0.5, inner sep=0]{
\begin{tikzpicture}
[align=center,node distance=4cm and 6cm]
  \tikzstyle{sharp} = [ellipse, minimum height=8em, minimum width=17em, draw,thick, fill]
\tikzset{dot/.style={circle,fill=#1,inner sep=0,minimum size=4pt}}
  \tikzstyle{block} = [rectangle, rounded corners, minimum height=1em, minimum width=6em, draw]

\coordinate (safeC) at (-3,-3);
\coordinate[right of=safeC]  (unsafeC) ;
\coordinate[above of=safeC] (isafeC) ;
\coordinate[above of=unsafeC] (iunsafeC) ;
  
  \matrix {
  
  \begin{scope}[fill opacity=0.5]
   \draw[green, fill,rounded corners=.5mm, name=safe]  (safeC) \irregularcircle{1cm}{1mm} {} ;
   \node[below,black, opacity=1.0] at (safeC.south){$\mathtt{Safe}$};
    \draw[red,fill, rounded corners=.5mm, name=unsafe]  (unsafeC) \irregularcircle{1cm}{1mm} {};
    \node[below,black, opacity=1.0] at (unsafeC.south){$\mathtt{Unsafe}$};
    \draw[green,fill,rounded corners=.5mm, name=isafe]  (isafeC) \irregularcircle{1cm}{1mm} {} ;
    \node[above,black, opacity=1.0] at (isafeC.south){$\mathtt{Safe^I}$};
    \node [sharp, draw, green] (safesharp) at ([xshift=40pt]isafeC){};
    \draw[red,fill, rounded corners=.5mm, name=iunsafe]  (iunsafeC) \irregularcircle{1cm}{1mm} {};
     \node[above,black, opacity=1.0] at (iunsafeC){$\mathtt{Unsafe^I}$};
     \node [sharp, draw, red] (unsafesharp) at ([xshift=-40pt]iunsafeC){};
    
   \node [dot] (s1) at (-3,-2.2) {};
  \node[dot] (s2) at (-3.6,-2.5){};
  \node[dot] (s3) at (-2.4,-2.5){};
  
  \node [dot] (is1) [above of=s1, yshift=-30pt]{};
  \node[dot] (is2) [above of=s2, yshift=-30pt]{};
  \node[dot] (is3) [above of=s3, yshift=-30pt]{};
  
  \node [dot] (iu1) [right of=is1]{};
  \node[dot] (iu2) [right of=is2]{};
  \node[dot] (iu3) [right of=is3]{};
  
    \node [dot] (u1) [right of=s1]{};
  \node[dot] (u2) [right of=s2]{};
  \node[dot] (u3) [right of=s3]{};
      \draw [<-, thick,snake=coil,segment aspect=10,segment length=17pt] (is2) -- (s2)  {} ;
       \draw [<-, thick,snake=coil,segment length=30pt] (is1) -- (s1)  {} ;
       \draw [<-, thick,snake=coil,segment length=40pt] (is3) -- (s3)  {} ;
       
           \draw [<-, thick,snake=coil,segment length=30pt] (iu1) -- (u1) node[midway, left] {} ;
      \draw [<-, thick,snake=coil,segment length=40pt] (iu2) -- (u2) node[midway, left] {} ;
       \draw [<-, thick,snake=coil,segment aspect=10,segment length=17pt] (iu3) -- (u3) node[midway, left] {} ; 
       
        \draw [<-,dotted, thick,snake=coil,segment length=30pt] (is1) -- (u2)  {} ;
        \draw [<-,dotted, thick,snake=coil,segment length=40pt] (is3) -- (u1)  {} ;
        \draw [<-,dotted, thick,snake=coil,segment length=30pt] (iu1) -- (s3)  {} ;
        \draw [<-,dotted, thick,snake=coil,segment length=40pt] (iu2) -- (s2)  {} ;
     
    \end{scope}
\\};

\begin{pgfonlayer}{background}
\draw [draw=red]
([xshift=-27pt,yshift=-15pt]safeC.south)

rectangle ([xshift=75pt,yshift=70pt]iunsafeC.north);
\end{pgfonlayer}
\end{tikzpicture}
};
\end{minipage}
\begin{minipage}{.32\textwidth}

\tikz \node[scale=0.5, inner sep=0]{
\begin{tikzpicture}
[align=center,node distance=4cm and 6cm]
  \tikzstyle{sharp} = [ellipse, minimum height=8em, minimum width=17em, draw,thick, fill]
\tikzset{dot/.style={circle,fill=#1,inner sep=0,minimum size=4pt}}
  \tikzstyle{block} = [rectangle, rounded corners, minimum height=1em, minimum width=6em, draw]

\coordinate (safeC) at (-3,-3);
\coordinate[right of=safeC]  (unsafeC) ;
\coordinate[above of=safeC] (isafeC) ;
\coordinate[above of=unsafeC] (iunsafeC) ;
  
  \matrix {
  
  \begin{scope}[fill opacity=0.5]
   \draw[green, fill,rounded corners=.5mm, name=safe]  (safeC) \irregularcircle{1cm}{1mm} {} ;
   \node[below,black, opacity=1.0] at (safeC.south){$\mathtt{Safe}$};
    \draw[red,fill, rounded corners=.5mm, name=unsafe]  (unsafeC) \irregularcircle{1cm}{1mm} {};
    \node[below,black, opacity=1.0] at (unsafeC.south){$\mathtt{Unsafe}$};
    \begin{scope}
    \path [clip] ([xshift=40pt]isafeC) ellipse [x radius=8.5em, y radius=4em];
    \path [clip] ([xshift=-40pt]iunsafeC) ellipse [x radius=8.5em, y radius=4em];
      \draw [fill,olive!60] ($(isafeC)!0.5!(iunsafeC)$) circle (20em);
    \draw[green,fill, rounded corners=.5mm, name=isafe]  (isafeC) \irregularcircle{1cm}{1mm} {} ;
    \draw[red, fill, rounded corners=.5mm, name=iunsafe]  (iunsafeC) \irregularcircle{1cm}{1mm} {};
     \end{scope}
    \node[above,black, opacity=1.0] at (isafeC.south){$\mathtt{Safe^I}$};
     \node[above,black, opacity=1.0] at (iunsafeC){$\mathtt{Unsafe^I}$};
    
   \node [dot] (s1) at (-3,-2.2) {};
  \node[dot] (s2) at (-3.6,-2.5){};
  \node[dot] (s3) at (-2.4,-2.5){};
  
  \node [dot] (is1) [above of=s1, yshift=-30pt]{};
  \node[dot] (is2) [above of=s2, yshift=-30pt]{};
  \node[dot] (is3) [above of=s3, yshift=-30pt]{};
  
  \node [dot] (iu1) [right of=is1]{};
  \node[dot] (iu2) [right of=is2]{};
  \node[dot] (iu3) [right of=is3]{};
  
    \node [dot] (u1) [right of=s1]{};
  \node[dot] (u2) [right of=s2]{};
  \node[dot] (u3) [right of=s3]{};
      \draw [<-, thick,snake=coil,segment aspect=10,segment length=17pt] (is2) -- (s2)  {} ;
       \draw [<-, thick,snake=coil,segment length=30pt] (is1) -- (s1)  {} ;
       \draw [<-, thick,snake=coil,segment length=40pt] (is3) -- (s3)  {} ;
       
           \draw [<-, thick,snake=coil,segment length=30pt] (iu1) -- (u1) node[midway, left] {} ;
      \draw [<-, thick,snake=coil,segment length=40pt] (iu2) -- (u2) node[midway, left] {} ;
       \draw [<-, thick,snake=coil,segment aspect=10,segment length=17pt] (iu3) -- (u3) node[midway, left] {} ; 
       
        \draw [<-,dotted, thick,snake=coil,segment length=30pt] (is1) -- (u2)  {} ;
        \draw [<-,dotted, thick,snake=coil,segment length=40pt] (is3) -- (u1)  {} ;
        \draw [<-,dotted, thick,snake=coil,segment length=30pt] (iu1) -- (s3)  {} ;
        \draw [<-,dotted, thick,snake=coil,segment length=40pt] (iu2) -- (s2)  {} ;
     
    \end{scope}
\\};

\begin{pgfonlayer}{background}
\draw [draw=red]
([xshift=-27pt,yshift=-15pt]safeC.south)

rectangle ([xshift=75pt,yshift=70pt]iunsafeC.north);
\end{pgfonlayer}
\end{tikzpicture}

};
\end{minipage}
\vspace{2mm}
\caption{Precondition inference: Reality (left), Initial approximations (middle), One step refinement   of approximations using Algorithm \ref{alg:precond} (right). We discard the regions classified
 as definitely safe or unsafe considering only the intersection.}
\label{fig:alg-progress}
\end{figure}

The progress of Algorithm \ref{alg:precond} is depicted in Figure \ref{fig:alg-progress}. The leftmost panel reflects the reality, that is, it depicts the set of concrete safe and unsafe states along with the set of initial states leading to them. The set of safe states and the set of initial states leading to them are respectively marked as $\mathtt{Safe}$ and $\mathtt{Safe^I}$ and are drawn on the left side of each panel. Similarly, the set of unsafe states and the set of initial states leading to them are respectively marked as $\mathtt{Unsafe}$ and $\mathtt{Unsafe^I}$ and are shown on the right side of each panel.  Given a program together with the description of the sets of interest (safe and unsafe states), the precondition analysis  infers a set of initial states that leads to these sets of interest. Due to approximations, it discovers  over-approximations of the initial sets of states. These over-approximations are depicted as ellipses in the middle panel with the corresponding colours for safe and unsafe initial states.  These approximations may intersect. As a consequence, there are some witness traces  from the left ellipse  to $\mathtt{Unsafe}$ (shown by dotted arrows) and vice versa.  The algorithm aims to reduce these ellipses progressively  towards the optimal condition, where the left ellipse leads to the set of safe states and the right one to the set of unsafe states (that is, these ellipses do not overlap anymore). A single step of refinement is illustrated in the rightmost panel.

In the algorithm, the following operations are carried out in an iterative manner and possibly in parallel (within the \emph{while} loop). The instructions on two sides of the boxes can be executed in parallel. One or more of the transformation of $P_s$ and $P_u$  with respect to \safe~and \unsafe~respectively are carried out  and the NPs are extracted from the resulting programs (\emph{line 6}). The algorithm terminates and returns an SP if the conjunction of these NPs is unsatisfiable (\emph{line 10}, \emph{optimal}) or it is not stronger (in the sense of implication of formulas) than $\phi_{old}$ (\emph{line 13}). Otherwise, the algorithm  iterates with the new under-approximations obtained by restricting their initial states with the conjunction (\emph{line 16}). Note that the  conjunction needs to be converted to DNF before applying Definition \ref{def:ris}, which may blow up the number of resulting clauses  (only the initial ones).
\medskip

\noindent
\textbf{Simulating the algorithm on the example program}. We denote the program  in Figure \ref{ex:precond} by $P$.  Initially,  $\phi_{old}=\trueit$, the initial state of $P$. First, we choose to apply PE   to $P$  with respect to \safe~ and \unsafe~ respectively (call it the iteration 0 of the algorithm). The resulting set of CHCs are shown in Figure \ref{pe_example} along with the derived NPs. Since the conjunction of  $\necpre{s}$ and $\necpre{u}$, that is, $\mathtt{B\geq 0 \wedge A \ge 1}$, is  \emph{satisfiable}, the  preconditions are not optimal. But the conjunction  is  stronger than $\phi_{old}$ (conjunction of previous approximations), so the algorithm progresses to refinement (\emph{line 14-16}).  Before under-approximating the programs, we update the SPs from this iteration  as:
\begin{flalign*}
\suffpre{s} & \equiv \mathtt{\falseit \vee (B\geq 0 \wedge \neg (B<0 \vee A\geq 1)) \equiv B\ge 0 \wedge A\leq 0.} \\
\suffpre{u} & \equiv \mathtt{\falseit \vee ((B<0 \vee A\geq 1) \wedge \neg B\geq 0) \equiv B<0.}
\end{flalign*}
The results of under-approximating the programs  in Figure \ref{pe_example} by $\mathtt{B\geq 0 \wedge A \ge 1}$ following Definition \ref{def:ris}  are trivial and are not shown here. In the next iteration (iteration 1), we apply CS  with respect to \safe~ and \unsafe~ respectively to the under-approximations, obtaining the clauses shown in Figure \ref{cs_after_ris}.
\begin{figure}[t]
\centerline{
  \begin{tabular}{ll}
    \pcode[\small]{
$\mathtt{unsafe \leftarrow  A<0, B=0, wh\_2(B,A) }$. \\ 
$\mathtt{wh\_2(A,B) \leftarrow B<0,  A=0, C=1,}$ \\
~~~~~~~~~~~~~~~~$\mathtt{B-D= -1,  wh\_1(C,D)}$. \\ 
$\mathtt{wh\_1(A,B) \leftarrow  A>B,  B\geq 0,  A\geq 1, init(A,B)}$. \\ 
$\mathtt{wh\_1(A,B) \leftarrow  A>B,  A\geq 1,  A-C= -1, }$ \\
~~~~~~~~~~~~~~~~$\mathtt{B-D= -1,  wh\_1(C,D)}$. \\ 
$\mathtt{init(A,B)\leftarrow A>B,  B\geq 0}$.  
    } &
    \pcode[\small]{
$\mathtt{safe \leftarrow A\geq0, B=0, wh\_2(B,A)}$. \\ 
$\mathtt{wh\_2(A,B) \leftarrow B\geq0,  A=0,  C=1,  }$ \\ 
~~~~~~~~~~~~~~~~$\mathtt{B-D= -1,wh\_1(C,D)}$. \\ 
$\mathtt{wh\_1(A,B) \leftarrow  B\geq A,   A\geq 1, init(A,B)}$. \\ 
$\mathtt{wh\_1(A,B) \leftarrow  B \geq A,   A-C= -1, }$ \\  
~~~~~~~~~$\mathtt{A \geq 1,B-D= -1,  wh\_1(C,D)}$. \\ 
$\mathtt{init(A,B) \leftarrow B \geq A, A \geq 1}$.
} 
  \end{tabular}
}
\caption{Constraint specialised programs: wrt \texttt{unsafe} (left) and wrt \texttt{safe} (right)\label{cs_after_ris}}
\end{figure}
Then the NPs are: 
$\necpre{s} \equiv \mathtt{(B\ge A \wedge A\geq 1)}$ and 
$\necpre{u} \equiv \mathtt{(B \ge 0 \wedge A > B).}$
Since $\necpre{s} \wedge \necpre{u} \equiv \falseit$, the preconditions are optimal and the algorithm terminates. The final SPs are given as the disjunction of SPs over the iterations as follows: $\suffpre{s} \equiv \mathtt{(B\ge 0 \wedge A \leq 0) \vee (B\ge A \wedge A\geq 1)}$ and $\suffpre{u} \equiv \mathtt{B<0 \vee (B \ge 0 \wedge A > B)}.$

If instead we apply the sequence $CS$, $PE$ to the original program, we get  
$\necpre{s} \equiv \mathtt{(B\ge0 \wedge A\leq 0) \vee (A\geq 1 \wedge B\geq A)}$,
$\necpre{u} \equiv \mathtt{(B<0 \wedge A\leq 0) \vee (A\geq 1 \wedge A> B)}$ as \emph{optimal} preconditions. 
This suggests that a sequence of transformations can potentially avoid the costly DNF conversion  (needed during refinement)  after each transformation.
Therefore, in our experiments, we  apply the sequence of transformations PE followed by  CS followed by TE, rather than a single one. The rationale behind this particular sequences is that CS is most effective when performed  after PE, which propagates constraints and introduces new versions of predicates. Trace elimination helps decompose the problem as well as splitting predicates, which induces a split in the resulting programs. 

In each iteration of the loop, the algorithm either explores a strictly smaller set of initial states that have not yet been classified as safe or unsafe or, terminates. This property is formalised in Proposition \ref{prop:progress}.

\begin{proposition}[Progress and Termination of Algorithm \ref{alg:precond}]
\label{prop:progress}
At each iteration of the \emph{while} loop, Algorithm \ref{alg:precond} either progresses (considers a smaller set of states that is yet to be proven safe or unsafe)  or terminates.
\end{proposition}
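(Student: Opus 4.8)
The plan is to argue per iteration, matching the three branches of the \emph{while} loop, and to identify the ``set of states yet to be proven safe or unsafe'' with the formula $\phi_{new} = \phi_s \wedge \phi_u$ computed in \emph{line 7}. I would take as progress measure the set of states described by $\phi_{old}$ at the top of the loop: initially $\necpre{s}(P)$, and on every subsequent iteration the $\phi_{new}$ carried over from the predecessor via \emph{line 15}. The claim then becomes: whenever the body does not return, this set shrinks \emph{strictly} (with respect to set inclusion, equivalently $\theory$-implication).

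The two terminating branches are immediate. If $\phi_{new} \equiv \falseit$ (\emph{line 10}) or $\phi_{old} \rightarrow \phi_{new}$ (\emph{line 13}), the algorithm returns and the loop exits, so that iteration ``terminates''. It remains to treat the else-branch (\emph{lines 14--17}), which is entered precisely when both guards fail, and in particular when $\phi_{old} \not\rightarrow \phi_{new}$.

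The crux is a \emph{monotonicity} claim holding at every iteration: $\models_{\theory} \phi_{new} \rightarrow \phi_{old}$. I would prove it by separating the first iteration from the rest. Write $\tuple{\hat P_s, \theta_s} = \trseqA[\safe]\tuple{P_s, \falseit, n}$, so that $\phi_s = \necpre{s}(\hat P_s) \vee \theta_s$. On the first iteration $P_s = P$ and $\phi_{old} = \necpre{s}(P)$, and Proposition~\ref{prop:sequence} gives $\phi_s \rightarrow \necpre{s}(P) = \phi_{old}$, whence $\phi_{new} = \phi_s \wedge \phi_u \rightarrow \phi_s \rightarrow \phi_{old}$. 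On a later iteration, \emph{lines 15--16} ensure $P_s = \ris(\bar P_s, \phi_{old})$, where $\bar P_s$ is the previous iteration's transformed program and $\phi_{old}$ equals the previous iteration's $\phi_{new}$. Since $\ris$ merely conjoins $\phi_{old}$ onto each initial constraint (Def.~\ref{def:ris}), the NP formula of Lemma~\ref{def:necsafe} yields the identity $\necpre{s}(\ris(\bar P_s, \phi_{old})) \equiv \necpre{s}(\bar P_s) \wedge \phi_{old}$, which implies $\phi_{old}$. Proposition~\ref{prop:sequence}, applied to the restricted program, then gives $\phi_s \rightarrow \necpre{s}(\ris(\bar P_s, \phi_{old})) \rightarrow \phi_{old}$, and once more $\phi_{new} \rightarrow \phi_s \rightarrow \phi_{old}$. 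The argument for $\phi_u$ is symmetric, so monotonicity holds in all cases.

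Combining the pieces finishes the else-branch: there we have both $\phi_{new} \rightarrow \phi_{old}$ (monotonicity) and $\phi_{old} \not\rightarrow \phi_{new}$ (the failed guard), so the set described by $\phi_{new}$ is a \emph{strict} subset of the one described by $\phi_{old}$; since \emph{line 15} installs this $\phi_{new}$ as the next $\phi_{old}$, the still-unclassified set strictly decreases, which is the asserted progress. I expect the main obstacle to be exactly the monotonicity claim, and specifically the bookkeeping needed to invoke Proposition~\ref{prop:sequence} on the \emph{restricted} program $\ris(\bar P_s, \phi_{old})$ rather than on $P$; the identity $\necpre{s}(\ris(P,\phi)) \equiv \necpre{s}(P) \wedge \phi$ is what makes this go through cleanly. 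A secondary subtlety is that trace elimination routes information through $\theta_s$ instead of the initial clauses, but Proposition~\ref{prop:sequence} already absorbs this by bounding $\necpre{s}(\hat P_s) \vee \theta_s$ as a whole. Note that the statement is a per-iteration dichotomy and does not by itself entail global termination, since strict descending chains over linear arithmetic need not stabilise.
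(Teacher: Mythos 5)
Your proof takes essentially the same route as the paper's: a case split over the three branches of the loop body, with the strict shrinkage of $\phi_{new}$ relative to $\phi_{old}$ as the progress measure in the else-branch. The only difference is one of rigour, not of approach: the paper's sketch simply asserts that the current formula is stronger than the previous one, whereas you actually establish the needed monotonicity $\models_\theory \phi_{new}\rightarrow\phi_{old}$ via Proposition~\ref{prop:sequence} and the identity $\necpre{s}(\ris(P,\phi))\equiv\necpre{s}(P)\wedge\phi$, and you correctly observe that the per-iteration dichotomy does not by itself yield global termination.
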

\begin{proof}[sketch]
At each iteration, the algorithm maintains a constraint formula (representing the initial set of states that is yet to be classified as safe or unsafe). The algorithm can exit if the formula in the current iteration is the formula $\falseit$ (\emph{line 8-10}), or is equivalent to the formula in the previous iteration (\emph{line 11-13}). In other case, the current formula is stronger (representing a smaller set of initial states, treating constraint formula as a set) than that in the previous iteration, so it recurses (\emph{line 14-16}).  This means the set that is yet to be proven safe or unsafe gets smaller. In this sense, the algorithm makes progress. 
\end{proof}

 Proposition \ref{prop:sequence} establishes the correctness  of the combination of transformations, Proposition \ref{prop:lift_precond} ensures that the precondition of an under-approximation of CHCs is also that of the original and Proposition \ref{prop:disj_precond} allows us to combine preconditions derived in each  iterations.All these propositions together ensure the soundness of  Algorithm \ref{alg:precond}, which is formally stated as follows. 

\begin{theorem}[Soundness of Algorithm \ref{alg:precond}]
\label{prop:soundness}
Let $P$ be a set of CHCs  annotated with the predicates \safe~(indicating a set of safe terminating states) and \unsafe~(indicating a set of unsafe states). If Algorithm \ref{alg:precond} returns a tuple $\tuple{S,U}$, then $S$ and $U$ are the  SPs for safety and unsafety of $P$, respectively, with respect to the predicates \safe~and \unsafe.
\end{theorem}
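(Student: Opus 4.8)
The plan is to prove soundness by tracking what each branch of the \emph{while} loop contributes to $\suffpre{s}$ and $\suffpre{u}$, and showing that each contribution is a valid SP for the \emph{original} program $P$, after which Proposition~\ref{prop:disj_precond} lets us combine them. First I would set up the loop invariant: at the start of each iteration we hold programs $P_s$ and $P_u$ that are obtained from $P$ by a sequence of goal-preserving transformations (PE, CS, TE on infeasible trees, recorded correctness via Proposition~\ref{prop:sequence}) followed by zero or more under-approximation steps $\ris(\cdot,\phi_{new})$ (Proposition~\ref{prop:under-approx}). The invariant I want is that every disjunct accumulated so far into $\suffpre{s}$ is an SP for safety of $P$, and symmetrically for $\suffpre{u}$. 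Since the SPs start at $\falseit$ (trivially valid SPs), I only need to check that each branch adds a valid SP disjunct, and then invoke Proposition~\ref{prop:disj_precond} to conclude the final disjunction is still an SP for $P$.

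The core of the argument is handling the three exit/continue branches. In each case the extracted NPs satisfy, by Proposition~\ref{prop:sequence}, $\models_\theory(\phi_s\to\necpre{s}(P_s^0))$ and $\models_\theory(\phi_u\to\necpre{u}(P_u^0))$ where $P_s^0,P_u^0$ are the programs \emph{before} the current round of under-approximations; and by Proposition~\ref{prop:lift_precond}, any SP for the under-approximated program lifts to an SP for $P$. The key logical fact I must establish is that $\phi_s\wedge\neg\phi_u$ is an SP for safety: because $\phi_u$ over-approximates the unsafe initial states (an NP for unsafety), any initial state satisfying $\neg\phi_u$ cannot reach \unsafe, and conjoining with $\phi_s$ keeps it inside the safe-reachable region, hence $\replace(P,\phi_s\wedge\neg\phi_u)\not\vdash_\theory\unsafe$, matching Definition~\ref{def:nec-suff-precond}. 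For the \emph{optimal} branch (line~8--10), the guard $\phi_{new}\equiv\falseit$ means $\phi_s\wedge\phi_u$ is unsatisfiable, so $\phi_s\to\neg\phi_u$ and the full $\phi_s$ is already an SP for safety (and $\phi_u$ for unsafety); adding $\phi_s$ outright is therefore sound, and this is exactly why the algorithm can return the stronger disjunct in that case. The other two branches (lines~11--13 and 14--16) add $\phi_s\wedge\neg\phi_u$ and $\phi_u\wedge\neg\phi_s$, which the above NP-to-SP reasoning validates directly.

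The step I expect to be the main obstacle is carefully lifting these per-iteration SPs back to $P$ through the \emph{accumulated} under-approximations. In the refinement branch the algorithm replaces $P_s$ by $\ris(P_s,\phi_{new})$, so in later iterations $P_s$ is several $\ris$ steps removed from $P$; I must argue that Proposition~\ref{prop:lift_precond} composes, i.e.\ that an SP for $\ris(\ris(\cdots\ris(P,\psi_1),\cdots),\psi_m)$ is still an SP for $P$. This follows because $\ris$ only ever tightens the initial clauses and leaves the rest untouched, so $\replace(\cdot,\phi)$ erases all the accumulated $\ris$ restrictions at once, yielding the same clause set as $\replace(P,\phi)$; I would state this as a short composition lemma (an easy induction on $m$) and cite it. The second subtlety is that I am combining SPs that were derived relative to \emph{different} programs $P_s^{(0)},P_s^{(1)},\dots$ across iterations; Proposition~\ref{prop:disj_precond} is stated for SPs of a single fixed $P$, so I must first lift each iteration's SP to $P$ (via the composition lemma) and only then take the disjunction. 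Once every accumulated disjunct is an SP for the fixed original $P$, Proposition~\ref{prop:disj_precond} gives that $S=\suffpre{s}$ and $U=\suffpre{u}$ are SPs for safety and unsafety of $P$ respectively, completing the proof. Termination of the enclosing loop is not needed for soundness (it is handled separately in Proposition~\ref{prop:progress}); the soundness claim is conditional on the algorithm returning.
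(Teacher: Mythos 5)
Your proposal is correct and follows essentially the same route as the paper's own (much terser) proof sketch: it rests on Proposition~\ref{prop:sequence} for the transformation sequence, Proposition~\ref{prop:lift_precond} for lifting SPs through the under-approximations, and Proposition~\ref{prop:disj_precond} for disjunctively combining the per-iteration contributions. The extra care you take --- stating the loop invariant, justifying that $\phi_s \wedge \neg\phi_u$ is an SP via the NP-to-SP relationship, and noting that $\ris$ must compose across iterations --- fills in details the paper leaves implicit rather than changing the argument.
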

\begin{proof}[sketch]
In each iteration the algorithm essentially computes over approximations of the set of \emph{safe} and \emph{unsafe} states using the sequence of transformations. Proposition \ref{prop:sequence} establishes the correctness  of the sequence. Then it computes the
under-approximations of the original program with the sets of states that are yet to be classified as \emph{safe} or \emph{unsafe} while accumulating  those states (preconditions) that are known to be \emph{safe} and known to be \emph{unsafe}. The under-approximations are fed to the next iteration. Proposition \ref{prop:lift_precond} ensures that the precondition of an under-approximation of CHCs is also that of the original.  Proposition \ref{prop:disj_precond} ensures that the preconditions accumulated (disjunctively) in each iteration are the preconditions for the original program. These arguments together ensures the soundness of the algorithm. \qed
\end{proof}
\paragraph{Limitations of our method and non-termination.} 
The SPs derived by our method may include non-terminating inputs, 
that neither lead to safe nor unsafe.
Popeea and Chin~\cite{PopeeaC13-dualanalysis} treat such inputs as 
\emph{unsafe} whereas Seghir and Schrammel~\cite{SeghirS14-precond} 
ignore them, as do we. 
However, the modelling of safe and unsafe terminating states and their 
over-approximations allow us to reason about a limited form of 
non-termination as suggested by Popeea and Chin~\cite{PopeeaC13-dualanalysis}. 
That is, any input state that is neither in the over-approximation of 
safe nor unsafe is non-terminating 
(again assuming that we model all the terminating safe and unsafe states). 

\begin{wrapfigure}[10]{r}{4.4cm}
\centering{
\vspace*{-4ex}
    \pcode[\small]{
   \textbf{void} main(\textbf{int}~ a) \{ \\
   ~~\textbf{while} ($a \geq 0 $)  \{\\
   ~~~~\textbf{if} ($a \leq 9$) 
   ~~{a++;} \\
   ~~~~\textbf{else if} ($a==10$) \\
   ~~~~~~{a = 5;} \\
   ~~~~\textbf{else}~\textbf{return}; \\
   ~~\} \\
   ~~\textbf{assert} ($\id{false}$); \\
   \}          
      }    
}
  \caption{Non-termination}
 \label{ex:non-termination}
\end{wrapfigure}
We illustrate this with the example in Figure \ref{ex:non-termination}. This program does not terminate if $a \in [ 0,10 ]$. We derive $\necpre{s}= \mathtt{a} \geq 11~ \text{and} ~\necpre{u}= \mathtt{a} < 0$ as NPs. 
Thus the condition satisfying $\neg (\necpre{s} \vee \necpre{u})$, that is, $\mathtt{\neg(a \geq 11 \vee a <0)} \equiv \mathtt{a \in [ 0,10 ]}$ is a precondition for non-termination (which happens to be precise in this case). It is obtained as our method's side-effect and we leave the primary analysis of  non-termination for future work.

\section{Experimental evaluation}
\label{sec:experiments}

We implemented Algorithm \ref{alg:precond} (a sequential version) in the tool \pihorn~(Precondition Inferrer for Horn clauses), which is available from \url{https://github.com/bishoksan/PI-Horn}. Given 3 different program transformations, the implementation fixes the length of the sequence to 3. The order of their application is PE followed by CS followed by TE. The tool is written in Ciao Prolog~\cite{Ciao} and makes use of the Parma Polyhedra Library (PPL)~\cite{BagnaraHZ08SCP} and Yices2 SMT solver~\cite{Dutertre:cav2014}. 
The input to our tool is a set of CHCs;  with \safe, \unsafe~and \init~as distinguished predicates (for technical reasons). It outputs a pair of SPs for safety and unsafety. The derived SPs are classified as: (i) \emph{optimal}: the precondition is both necessary and sufficient (exact); (ii) \emph{(un)safe-non-trivial}: the SP for (un)safety is different from $\falseit$ (but not \emph{optimal}) and (iii) \emph{(un)safe-trivial}: the SP for (un)safety  is  $\falseit$. 
\medskip

\noindent
\textbf{Benchmarks and experimental settings.} We tested our approach with 261 integer programs (available from \url{https://github.com/bishoksan/PI-Horn/tree/master/benchmarks}) 
collected from the following sources which we think could be the good candidates.   
(i) 150 programs from the loop (69) and recursive (81) categories of SV-COMP'18~\cite{svcomp17}; 
(ii) 83 programs from the DAGGER~\cite{DBLP:conf/tacas/GulavaniCNR08} and TRACER tools~\cite{DBLP:conf/cav/JaffarMNS12} and 
(iii) 28 programs from the literature on precondition inference and backwards analysis~\cite{DBLP:conf/sas/BakhirkinBP14,DBLP:journals/entcs/Mine12,DBLP:conf/vmcai/Moy08,DBLP:conf/sas/BakhirkinM17,DBLP:conf/rp/CassezJL17}. Unfortunately, we could not include all the benchmarks used by Kafle et al.~\cite{kafle-iclp18} since some of their C sources were not available, but only their CHCs representations modelling only the \emph{unsafe} terminating states.  
Benchmark set (i) is designed for software verification competitions  while (ii) and (iii) are designed to demonstrate the strength of some specific tools and techniques. We adapt these programs written in C for precondition inference in the following way. They
 are first translated to CHCs by specialising an interpreter of C written in Prolog using the tool VeriMap~\cite{verimap}.  The specialisation approach was described by De Angelis et al.~\cite{DBLP:journals/scp/AngelisFPP17}, and models only the states leading to an ``error'' state by the predicate \unsafe.  The current translation also models  \emph{return} statements and the initial states of programs  by two special predicates \safe~and \init~respectively (again using VeriMap), which allows us to reason about the safe terminating states and the initial states respectively. Given a sequence of initial state variables $x_1,  \ldots, x_n$ and the corresponding initial clause $\init(x_1,  \ldots, x_n) \leftarrow \phi(x_1,  \ldots, x_n)$ obtained by the translation, we replace $\phi(x_1, x_2, \ldots, x_n)$ by $\trueit$ obtaining the clause $\init(x_1, x_2, \ldots, x_n) \leftarrow \trueit$. The replacement allows the analysis to infer preconditions in terms of these variables starting from an unrestricted set of \emph{initial} states. 
 
Unfortunately, we could not compare our tool against the recent works of Seghir et al.~\cite{SeghirK13-precond,SeghirS14-precond} and that of Bakhirkin et al.~\cite{DBLP:conf/sas/BakhirkinBP14}, the first due to some issues with the tool (discovered together with the authors), and the second due to the lack of a tool automating their approach (confirmed by the authors via email).
But we do compare to the work of Kafle et al.~\cite{kafle-iclp18} (represented in the table as \wprahft). Although some of the components used in our tools are the same,  a direct comparison is difficult  due to different outcomes. That tool, for example, cannot detect optimality and does not attempt to do so, unlike ours. Therefore, it is hard to compare the quality of the generated preconditions. 

Experiments have been carried on a MacBook Pro, 
running OS X 10.11 in 16GB Memory and 2.7 GHz Intel Core i5 processor. 
\medskip
 
\noindent
\textbf{Experimental Results.} 
Table \ref{tbl:exp-results} presents the results. 
The columns respectively show 
\textsf{iter} (\# of refinement iterations (\emph{line 14-16})), 
\textsf{opt} (\# of programs with optimal preconditions), 
\textsf{ntS} (\textsf{Sw}) (\# of programs with non-trivial SPs for safety excluding optimal cases, and, in parentheses, the difference with trivial SPs for unsafety),  
\textsf{ntU} (\textsf{Uw}) (same, but for unsafety), 
\textsf{ntSU} (\# of programs with non-trivial SPs for both safety and unsafety),  
\textsf{tSU} (\# of programs with trivial SPs for both safety and unsafety), 
\textsf{total}/\textsf{iter} (\# of programs with non-trivial plus optimal SP per iteration), 
\wprahft~\textsf{total}/\textsf{iter} (\# of programs with non-trivial SP 
per iteration for \wprahft). 
For example, the entry 9 (7) in column 3 indicates that there were 9
non-trivial SPs for safety, of which 7 had trivial SPs for unsafety. Each row contains the number of instances for which the algorithm terminated after the given number of iterations. 
\medskip

\begin{table}[t]
\newcommand{\ph}{\hphantom{1}}
\centering
\resizebox{\textwidth}{!}{  
    \begin{tabular}{|r|r|r|r|r|r|r|r|r|}
    \hline
   \centering{\textsf{iter}} & \textsf{opt} & \multicolumn{1}{|p{1.5cm}|}{\centering \textsf{ntS} (\textsf{Sw})} & \multicolumn{1}{|p{1.5cm}|}{\centering \textsf{ntU} (\textsf{Uw})} & \multicolumn{1}{|p{1.5cm}|}{\centering \textsf{ntSU}} & \multicolumn{1}{|p{1.5cm}|}{\centering \textsf{tSU} }& &  \textsf{total}/\textsf{iter} & \multicolumn{1}{|p{2cm}|}{\centering \wprahft  \\ \textsf{total}/\textsf{iter}}\\ \hline \hline
    0       & 58    & 0 \ph(0)         & 0 (0)         & 0                  & 0               && 58     &    197  \\ \hline
    1       & 87    & 9 \ph(7)         & 5 (3)         & 2                  & 20             && 99      &   20 \\ \hline
    2       & 21    & 20 (15)       & 7 (2)         & 5                  & 0              && 43    &    0   \\ \hline
    3       & 5     & 6 \ph(3)         & 3 (0)         & 3                  & 0              && 11       &  0  \\ \hline
    4       & 2     & 3 \ph(1)         & 2 (0)         & 2                  & 0              && 5        &   0 \\ \hline
    5       & 2     & 0 \ph(0)         & 0 (0)         & 0                  & 0              && 2       &   1  \\ \hline
    6       & 1     & 0 \ph(0)         & 0 (0)         & 0                  & 0              && 1        &   0 \\ \hline \hline
    \#total & 176   & 38 (26)       & 17 (5)        & 12                 & 20             && 219  & 218        \\ \hline
    \end{tabular}
    }
    \vspace{2mm}
    \caption{Experimental results  on  261 programs, with a timeout of 300 seconds.}
    \label{tbl:exp-results}
    \vspace{-3em}
\end{table}

The results show our tool is able to generate non-trivial preconditions 
for $83\%$ of the programs while timing out for $9\%$.
The average time taken (ignoring timeouts) per each instance is 5.94 seconds. 
Optimal preconditions are inferred for $67\%$ of the programs whereas for 
$8\%$ the tool fails to infer any meaningful preconditions. 
A meaningful SP for safety or unsafety ($\neq \falseit$) apart from 
optimal ones is inferred for $16\%$ of the problems 
(columns $3^{rd}$ and $4^{th}$).   
Preprocessing using partial evaluation and constraint specialisation, 
as well as refinement by removing program traces and restarting the 
analysis from unresolved initial states, increases the precision of the 
analysis for the following reasons. 
On the one hand, partial evaluation brings polyvariant specialisation by 
creating several versions of the predicates, which the abstract interpreter 
in constraint specialisation can take advantage of. 
On the other hand, the refinement helps limit the set of initial 
states to explore and/or reduces the number of paths to explore. 
Note that the tool infers optimal preconditions for 58 programs due 
to preprocessing alone, which is indicated by the result along the row 
\emph{iteration} 0 in the table, whereas it infers optimal preconditions 
for 99 programs after the first refinement. 
However, it should be noted that the refinement also hinders the 
performance of the analysis by increasing the size of the resulting programs. 
This explains why we solve fewer and fewer problems after iteration 2. 
The number of programs with non-trivial SPs are almost the same for both 
the tools (219 vs. 218), but they differ in the quality (such as optimality) 
of the generated SPs. 
The refinement in our approach allows us to derive optimal or non-trivial 
preconditions for more programs, but in case of \wprahft~it does not yield 
any better results after the first refinement iteration 
(note though that the derived preconditions may be more general).  
The average time taken (ignoring the timeouts
\pds{as Graeme pointed out, this is the wrong way to handle timeouts when computing averages}) 
by \wprahft\ per each instance is 5.59 (first \emph{iteration}) 
and 8.34 (sixth \emph{iteration}) seconds. 
The first one is bit lower and the second is bit higher than in our approach. 
The reason for this difference is that \wprahft\ does not know when to stop 
refining since the termination criterion, the maximum number of iterations, 
is rather weak. 
Another possible reason is that at each iteration an under-approximation 
of the original program is analysed by our tool as opposed to \wprahft, 
which can be easier. Though the refinement may help derive a 
weaker SP, it also consumes more time.

\section{Related work}\label{sec:rel}

Given a wide range of applications, there are a handful of techniques in the literature dealing with precondition inference. We classify them as follows.
\medskip

\noindent
\textbf{Abstract interpretation:}
Over-approximation techniques (forward and backwards abstract interpretations or
their combination~\cite{CousotCousot92-3,CCL_VMCAI11,DBLP:conf/sas/BakhirkinM17}) inherently derive NPs and complementation allows deriving SPs at the cost of precision (due to approximation of the complement). In order to minimise the loss, we  need to find either a (pseudo) complemented domain~\cite{Marriott-Sondergaard-LOPLAS93} or   an alternative to complementation that is less aggressive.  Along these lines, Howe et al.~\cite{DBLP:conf/lopstr/HoweKL04} utilise the pseudo-complemented domain \emph{Pos}  to infer SPs whereas Bakhirkin et al.~\cite{DBLP:conf/sas/BakhirkinBP14} exchange an abstract complement operation for  abstract logical subtraction. In contrast to these lines of work, we neither assume an abstract domain to be complemented nor apply complementation of an abstract element during abstract interpretation. Instead our method applies to any abstract domain, and the complementation is carried out as a separate process outside the abstract domain, storing the result as a formula without any loss of precision.

Little work has been done that inherently computes SPs without complementation.
The notable exception is the work of Min\'e~\cite{DBLP:journals/entcs/Mine12}, which designs all required purpose-built backward transfer functions for intervals, octagons and convex polyhedra domains. The downside is that  the purpose-built operations, including widening, can be rather intricate and require substantial implementation effort. Moy~\cite{DBLP:conf/vmcai/Moy08} employs weakest-precondition reasoning and forward abstract interpretation to attempt to generalise conditions at loop heads to infer SPs. The derived conditions offer limited use except  for a theorem prover. As opposed to these works, our method makes use of standard tools and techniques offering simplicity and ease of implementation. Furthermore, our results can be consumed by other tools in the area of program analysis and verification.
Though formulated in the context of software verification, the dual-analysis approach of Poppea and Chin~\cite{PopeeaC13-dualanalysis} uses over-approximations as in our approach to concurrently infer NPs for safety and unsafety of each method (and program), from which SPs can be derived.  However, in contrast to our approach, no refinement of the derived preconditions takes place.
\medskip

\noindent
\textbf{Program transformation:}
The forward and backward iterative specialisation approach of De Angelis et al.~\cite{DBLP:journals/scp/AngelisFPP14} designed to verify program properties can be used for deriving preconditions, as in our approach. The transformation approach uses constraint generalisation method instead of abstract interpretation and can be adjusted to refine  preconditions.
\medskip

\noindent
\textbf{Counterexample-guided abstraction refinement (CEGAR):}
Seghir et al.~\cite{SeghirK13-precond,SeghirS14-precond} use a CEGAR approach to derive \emph{exact} necessary and sufficient preconditions for safety. Like us, they model safe and unsafe states of a program and refine their approximations until they are disjoint. Their algorithm may diverge due to (i)  the lack of a suitable generalisation of the counterexamples (an inherent limitation of CEGAR) and (ii) the termination condition (disjointness) that is too hard to achieve for realistic programs (due to undecidability). We, on the other hand, use abstract interpretation and program transformation, so each step of the algorithm terminates and a sound precondition can be derived from the resulting programs. In addition, optimality is not the end goal for us and it is obtained as a by-product of precision refinement.  

The work of Kafle et al.\ \cite{kafle-iclp18} is orthogonal to all of the above; it combines a range of established tools and techniques such as abstract interpretation, CEGAR and program transformations in a profitable way. The iterative nature of their approach allows them to derive more precise preconditions for safety on demand, however the termination criterion, the maximum number of iterations supplied by the user, is rather weak. The quality of the preconditions such as optimality cannot be checked using their approach. The current work offers several advantages over that approach. We model both safe and unsafe states of programs. This allows detecting optimality and also inferring NP and SP for both safety and unsafety. In addition, it allows reasoning about a limited form of non-termination and provides more refined termination criteria. Unlike many methods in the literature~\cite{SeghirK13-precond,SeghirS14-precond,DBLP:conf/sas/BakhirkinBP14}, our method can handle not only programs with procedures, but also recursive programs, in a uniform way.

\section{Concluding remarks}
\label{sec:conclusion}

We have presented an iterative method for automatically deriving sufficient  preconditions for safety and unsafety of  programs. It maintains  over-approximations of the set of \emph{safe} and \emph{unsafe} initial states. At each iteration, only the set of states that are common to these over-approximations are considered, as those are yet to be classified as safe or unsafe. The method terminates when the common set of states is empty or the common set does not get smaller in successive iterations. Experimental results show that the method is able to generate optimal preconditions in $67\%$ of the cases and also  solves some problems which are otherwise unsolvable  using only approximations of the unsafe states (as was done in previous work). 
Owing to over-approximation, the sufficient preconditions may include some non-terminating states, which hinders the derivation of optimal preconditions. 
In future work we intend to augment our method with non-termination analysis.

\paragraph{\textbf{Acknowledgement}}
We thank John Gallagher for comments on an earlier draft and many useful 
discussions which has led to many improvements. 
We thank Emanuele de Angelis for C to CHC translation, 
help with the tool VeriMAP, benchmarks and many stimulating discussions. 

\bibliographystyle{abbrv}
\bibliography{refs}
\end{document}